\def\Im{\mathrm {Im}\,}
\def\N{\mathbb{N}}
\def\Z{\mathbb{Z}}
\def\Q{\mathbb{Q}}
\def\wt{\widetilde}
\def\wh{\widehat}
\def\1{\mathbf{1}}
\def\d{\mathrm d}
\def\e{\mathrm{e}}
\def\i{\mathrm{i}}
\def\pa{\partial}
\newcommand{\tr}[1][]{\mathrm{tr}_{#1}\,}
\def\B{\mathcal{B}}
\def\be{\begin{equation}}
\def\ee{\end{equation}}
\newtheorem{theorem}{Theorem}[section]
\newtheorem{definition}[theorem]{Definition}
\newtheorem{lemma}[theorem]{Lemma}
\newtheorem{proposition}[theorem]{Proposition} 
\newtheorem{corollary}[theorem]{Corollary}
\theoremstyle{remark}
\newtheorem{remark}[theorem]{Remark}
\def\p{\mathbf{p}}
\def\u{\mathbf{u}}
\def\DR{\mathrm{DR}}
\def\End{\mathrm{End}\,}
\def\SL{\mathrm{SL}}
\newcommand{\sltwo}{\mathfrak{sl}_2}
\def\Mgn{\overline{\mathcal M}_{g,n}}
\def\L{\Lambda}
\newcommand{\QtoP}{\Q^{\partitions}}
\newcommand{\partitions}{\mathscr{P}}
\newcommand{\pdv}[2]{\frac{\partial #1}{\partial #2}}
\newcommand{\blok}{\odot}
\newcommand{\bigblok}{\bigodot}
\def\fq#1{\left\langle{#1}\right\rangle_{\! q}}
\def\opq#1{\left\lbrace{#1}\right\rbrace_{\!q}}
\def\opg{\mathsf{Op}_c(g)}
\def\op#1{\mathsf{Op}_c\bigl({#1}\bigr)}
\def\opgzero{\mathsf{Op}_0(g)}
\def\opzero#1{\mathsf{Op}_0\bigl({#1}\bigr)}
\def\oplarge#1{\mathsf{Op}_c\biggl({#1}\biggr)}
\renewcommand{\=}{\: =\: }
\newcommand{\defis}{\: :=\: }
\newcommand{\+}{\,+\,}
\newcommand{\meno}{\,-\,}
\begin{document}

\numberwithin{equation}{section}

\title{Quantum KdV hierarchy and quasimodular forms}
\author[1,2]{Jan-Willem M. van Ittersum}
\author[3,4]{Giulio Ruzza}
\renewcommand\Affilfont{\footnotesize}
\affil[1]{\textit{Max-Planck-Institut f\"ur Mathematik, Vivatsgasse 7, 53111 Bonn, Germany;}}
\affil[2]{Current address: \emph{Department of Mathematics and Computer Science, University of Cologne,
Weyertal 86-90, 50931 Cologne, Germany;} {\texttt{j.w.ittersum@uni-koeln.de}}
}
\affil[3]{\textit{Grupo de F\'{i}sica Matem\'{a}tica, Departamento de Matemática, Instituto Superior T\'ecnico, Av. Rovisco Pais, 1049-001 Lisboa, Portugal;}}
\affil[4]{\textit{Departamento de Matem\'atica, Faculdade de Ci\^encias da Universidade de Lisboa, Campo Grande Edif\'{i}cio C6, 1749-016, Lisboa, Portugal;} \texttt{gruzza@fc.ul.pt}}

\date{}
\maketitle

\begin{abstract}
Dubrovin~\cite{Dubrovin} has shown that the spectrum of the quantization (with respect to the first Poisson structure) of the dispersionless Korteweg--de Vries (KdV) hierarchy is given by shifted symmetric functions; the latter are related by the Bloch--Okounkov Theorem~\cite{BlochOkounkov} to quasimodular forms on the full modular group.
We extend the relation to quasimodular forms to the full quantum KdV hierarchy (and to the more general quantum Intermediate Long Wave hierarchy).
These quantum integrable hierarchies have been defined by Buryak and Rossi~\cite{BuryakRossi2} in terms of the Double Ramification cycle in the moduli space of curves.
The main tool and conceptual contribution of the paper is a general effective criterion for quasimodularity.
\end{abstract}

\medskip
\medskip

\noindent
{\small{\sc AMS Subject Classification (2020)}: 05A17, 11F11, 14H70, 37K10}

\noindent
{\small{\sc Keywords}: partitions, modular forms, double ramification cycles, quantum integrable hierarchies.}

\medskip

\section{Introduction}\label{sec:intro}

\subsection{Differential polynomials and \texorpdfstring{$q$}{q}-series.}\label{sec:1.1}

Let $\L:=\Q[\p]$ be the ring of polynomials with rational coefficients in the variables~$p_j\,,$ for $j\geq 1$, collectively denoted by $\p:=(p_1,p_2,p_3,\dots)$.
Assigning weight~$k$ to $p_k$ we have the grading $\L = \bigoplus_{n\geq 0}\L_n\hspace{1pt},$ where $\L_n$ consists of polynomials of homogeneous weight~$n$.
For any linear operator $G\in\End(\L)$ such that\footnote{Throughout this paper we denote by $[A,B] := AB-BA$ the commutator of~$A$ and~$B$.}
\be
\label{eq:condition}
\biggl[G\, , \,\sum_{k\geq 1}k\, p_k\,\pdv{}{p_k}\biggr]\=0,
\ee
i.e., such that $G$ restricts to a linear operator on~$\L_n$ for any $n\geq 0$, we introduce the~$q$-series
\be
\label{eq:qseries}
\opq G \defis \frac{\sum_{n\geq 0}q^n\,\tr[\L_n]G}{\sum_{n\geq 0}q^n\,\dim\L_n}.
\ee
Equivalently, $\opq G = q^{-1/24}\eta(q)\sum_{n\geq 0}q^n\,\tr[\L_n]G$, where $\eta(q) = q^{1/24}\prod_{k\geq 1}(1-q^k)$ is the Dedekind eta function.

In certain cases, the~$q$-series~\eqref{eq:qseries} specializes to the well-studied~{$q$-bracket}, introduced in \cite{BlochOkounkov}.
The latter is attached to a function $f:\partitions\to\Q$ from the set~$\partitions$ of partitions to the rationals and defined by
\be
\label{eq:BOqseries}
\fq{f}\defis\frac{\sum_{\lambda\in\partitions}q^{|\lambda|}f(\lambda)}{\sum_{\lambda\in\partitions}q^{|\lambda|}}\=q^{-1/24}\eta(q)\sum_{\lambda\in\partitions}q^{|\lambda|}f(\lambda),
\ee
where $|\lambda|:=\lambda_1+\dots+\lambda_\ell$ denotes the integer $\lambda=(\lambda_1,\ldots,\lambda_\ell)$ is a partition of. (Namely, if $G$ is the diagonal operator acting as multiplication by $f(\lambda)$ on $p_\lambda=p_{\lambda_1}\cdots p_{\lambda_\ell}\hspace{1pt},$ then $\opq G=\fq f\hspace{1pt}.$)

In particular, Bloch and Okounkov \cite{BlochOkounkov} proved that for the class of \emph{shifted symmetric} functions of partitions, the~{$q$-bracket}~\eqref{eq:BOqseries} is quasimodular of homogeneous weight.
This class consists of homogeneous polynomials of certain basic functions $Q_k:\partitions\to\Q$, for $k\geq 0$, where the weight of~$Q_k$ is defined to be~$k$.
The basic functions are defined by $Q_0(\lambda)=1$ and 
\be
\label{eq:Qk}
Q_k(\lambda) \defis
\frac 1{(k-1)!}\sum_{i\geq 1}\left[\left(\lambda_i-i+\tfrac{1}{2}\right)^{k-1}-\left(-i+\tfrac{1}{2}\right)^{k-1}\right]\+\beta_k
\ee
for $k\geq 1$, where $\beta_k = \left(\frac 1{2^{k-1}}-1\right)\frac{B_k}{k!}$ and $B_k$ denotes the $k$th Bernoulli number.
The central characters of the symmetric group are shifted symmetric functions \cite{KerovOlshanski} and hence these functions appear in the study of asymptotic properties of partitions \cite{IvanovOlshanski,Rios}, as well as in many works in enumerative geometry, e.g., in the Hurwitz/Gromov--Witten theory of an elliptic curve \cite{Dijkgraaf, Engel, EskinOkounkov, HahnIttersumLeid, Ochiai, OkounkovPandharipande}, or in the determination of the Siegel--Veech constants of the moduli space of flat surfaces \cite{EskinOkounkov,ChenMollerZagier,ChenMollerSauvagetZagier}.

The appearance of shifted symmetric functions in the study of integrable hierarchies is part of an interesting story.
As explained in \cite{Dubrovin}, Eliashberg \cite{Eliashberg} solved the quantization problem for the classical Hopf hierarchy by using ideas coming from \emph{Symplectic Field Theory}.
Concretely, he constructed a commuting family of quantum Hamiltonians~$G_k^{\sf Hopf}$ ($k\geq -2$), which are operators on~$\Lambda$ depending on a constant\footnote{This constant is denoted $u_0$ in \cite{Dubrovin}.}~$c$, 
obtained from differential polynomials by the procedure explained in the next paragraph.
In particular, after inserting 
a \emph{quantization} parameter $\hbar$ in these differential polynomials (see Remark~\ref{remark:normalization}), in the limit $\hbar \to 0$ they reduce to the Hamiltonian densities of the classical Hopf hierarchy.
Rossi \cite{Rossi} showed that the operators~$G_k^{\sf Hopf}$, under the \emph{boson-fermion correspondence} (see, e.g., \cite{MiwaJimboDate}), are quadratic in fermions---a fact that was exploited by Dubrovin to diagonalize these operators (to then provide applications to the symplectic field theory of the disk). Namely, \cite[Theorem~1.4]{Dubrovin}
\be
\label{eq:Dubrovin}
G_k^{\sf Hopf}\, s_\lambda(\p)=E_{k}^{[0]}(\lambda)\,s_\lambda(\p),\qquad \lambda\in\partitions,\quad k\geq -2,
\ee
where the eigenvalues~$E_k^{[0]}:\partitions\to\Q$ are shifted symmetric functions, given explicitly by
\be
\label{eq:disperionlesseigenvalue}
E_k^{[0]}\=\sum_{j=0}^{k+2}\frac{c^{k+2-j}}{(k+2-j)!}\,Q_j\hspace{1pt},
\ee
and the eigenvectors~$s_\lambda(\p)$ are the Schur functions\footnote{Expressed in terms of the power sum polynomials.} \cite{Macdonald}, defined by
\be \label{eq:Schur}
s_\lambda(\p) \defis \det\left[h_{\lambda_i-i+j}(\p)\right]_{i,j=1}^{\ell(\lambda)},\qquad \sum_{k\in\Z} y^k h_k(\p) \= \exp\Bigl(\sum\nolimits_{k\geq 1}\frac{p_k}ky^k\Bigr).
\ee
It follows immediately from~\eqref{eq:Dubrovin} and the Bloch--Okounkov Theorem that
\be \label{eq:opqEliashberg}
\opq {G_k^{\sf Hopf}} \= \sum_{j=0}^{k+2}\frac{c^{k+2-j}}{(k+2-j)!}\,\fq {Q_j}
\ee
is a polynomial in $c$ with quasimodular coefficients.
Note that this expression is of homogeneous weight~$k+2$, provided we assign weight~$+1$ to~$c$, and consider the quasimodular coefficients with the corresponding quasimodular weight.

In the present work we study quasimodularity properties of a deformation  $G_k^{\sf KdV}(\epsilon)$ (constructed by Buryak and Rossi~\cite{BuryakRossi2}) of the operators $G_k^{\sf Hopf}$, depending (polynomially) on an additional parameter~$\epsilon$ and satisfying $G_k^{\sf KdV}(0)=G_k^{\sf Hopf}$.
These are the Hamiltonian operators of the \emph{quantum Korteweg--de Vries} (KdV) hierarchy\footnote{The quantization is with respect to the first Poisson structure, cf.~\cite{BuryakRossi2,Dubrovin}.}, as the corresponding densities (again, after properly introducing the parameter $\hbar$, see Remark~\ref{remark:normalization}) reduce in the limit $\hbar\to 0$ to the Hamiltonian densities of the classical KdV hierarchy.
(Incidentally, the construction in op.\ cit., which we briefly review in Section~\ref{sec:BR}, is much more general, and produces a quantum integrable hierarchy attached to any Cohomological Field Theory. The KdV case corresponds to the trivial Cohomological Field Theory.)

Our first goal is to study whether the quasimodularity of $q$-series associated with the Hopf hierarchy, expressed by~\eqref{eq:opqEliashberg}, survives under the deformation in~$\epsilon$ (as anticipated in~\cite{RuzzaYang}).
We answer this in the affirmative (Theorem~\ref{thm:KdV}), by providing a general criterion for quasimodularity (Theorem~\ref{thm:main}).
Moreover, as a byproduct of the general criterion, we obtain a simplification of the quantum Hamiltonian operators which we expect to be useful in the study and classification of quantum integrable hierarchies of rank~$1$ \cite{BuryakDubrovinGuereRossi}.

We now move on to a more detailed explanation of our findings.

\paragraph{From differential polynomials to operators.} 
In this paper, we shall be concerned with quasimodular properties of $\opq G$ for operators~$G$ on~$\L$ which are obtained out of a polynomial $g\in\Q[\u]$ by the following \emph{quantization} procedure.
Here $\Q[\u]$ is the ring of polynomials with rational coefficients in the variables~$u_j\hspace{1pt},$ for $j\geq 0$, collectively denoted by $\u:=(u_0,u_1,u_2,\dots)$.

First, for $j\in\Z_{\geq 0}\hspace{1pt},$ we introduce the Fourier series\footnote{With the convention $0^0 = 1$.} $v_j(x):=\sum_{\ell\in\Z}(\i \ell)^j\,\omega_\ell\,\e^{\i \ell x}$, where we assign weight $\ell$ to $\omega_\ell\hspace{1pt}.$
Note that $v_j(x)=\pa_x^j v_0(x)$ for $j\geq 1$.
Next, given $g\in\Q[\u]$, define a formal power series (of homogeneous weight $0$) in the variables $\omega_\ell\hspace{1pt},$ $\ell\in\Z$, by
\be
\int_0^{2\pi}g(\mathbf v(x))\,\frac{\d x}{2\pi},\qquad \mathbf v(x) \defis (v_0(x),v_1(x),\dots).
\ee
Write all monomials in this series as a product of $\omega_\ell$'s where all the $\omega_\ell$'s with $\ell\geq 0$ appear to the left of all $\omega_\ell$'s with $\ell<0$ (\emph{normal ordering}).
In this expression, we finally replace $\omega_\ell$ with the operator $\widehat{\omega}_\ell\in\End(\Lambda)$, defined by
\be
\label{eq:representation}
\widehat{\omega}_\ell \,f \defis 
\begin{cases}
p_\ell \, f&\ell\geq 1,
\\[2pt]
c \, f&\ell=0,
\\[2pt]
-\ell\,\frac{\partial f}{\partial p_{-\ell}},&\ell\leq-1,
\end{cases}\qquad\qquad (\ell\in \Z,f\in\L)
\ee
with $c\in\Q$ an arbitrary constant.
We denote by $\opg$ the operator on~$\L$ obtained from~$g\in\Q[\u]$ in this way. 
In other words, we give the following definition.

\begin{definition}
For $g\in \Q[\u]$, denote by $\opg:\L\to \L\oplus \L\i$ the operator, depending on~$c\in \Q$, given by
\[ \opg \defis \int_0^{2\pi}:\mspace{1mu}g\bigl(\widehat{\mathbf{v}}(x)\bigr)\mspace{1mu}:\frac{\d x}{2\pi}\,,
\]
where $:\mspace{1mu}\cdots\mspace{1mu}:$ denotes the normal ordering, $\widehat{\mathbf{v}}(x) = \bigl(\widehat{v}_0(x),\widehat{v}_1(x),\ldots)$ with $\widehat{v}_j(x):=\sum_{\ell\in \Z} (\mathrm{i}\ell)^j \, \widehat{\omega}_\ell \,e^{\mathrm{i}\ell x}$ and $\widehat{\omega}_\ell$ as defined above. 
\end{definition}

\label{pageevenodd}It is worth noting that $\opg$ might be complex-valued.
More precisely, let $\Q[\u] = \Q[\u]^{\sf even}\oplus\Q[\u]^{\sf odd},$ where $\Q[\u]^{\sf even}$ (respectively, $\Q[\u]^{\sf odd}$) is the span of monomials which are even (respectively, odd) with respect to the weight operator~$\sum_{j\geq 1}j\mspace{1mu}u_j\mspace{1mu}\pdv{}{u_j}$.
Then, $\opg$ is purely real for $g\in\Q[\u]^{\sf even}$, and purely imaginary for $g\in\Q[\u]^{\sf odd}$.\vspace{2pt}

Let us give a few examples:\vspace{-7pt}
\begin{itemize}\itemsep2pt
\item $\opg =0$ when~$g$ is in the image of the operator $\displaystyle \pa_x\defis \sum\nolimits_{j\geq 0}u_{j+1}\pdv{}{u_j},$ \vspace{-5pt}
\item $\op {u_0} \= c$,
\item $\displaystyle \op {u_0^2} \= c^2\+2\sum\nolimits_{j\geq 1}j \,p_j\,\frac{\partial}{\partial p_j}$,
\item $\displaystyle \op{u_0^3} \= c^3\+6c\sum\nolimits_{j\geq 1}j \, p_j\, \frac{\partial}{\partial p_j}\+6\Delta,$ where
\be
\Delta \= \frac 12\sum_{j,k\geq 1}\Bigl((j+k)\,p_j\,p_k\,\pdv{}{p_{j+k}}+j\,k\,p_{j+k}\,\pdv{^2}{p_j\pa p_k}\Bigr)
\ee
is the \emph{cut-and-join} operator \cite{Goulden}.
\end{itemize}

If $g\in\Q[\u]^{\sf even}$, the operator~$\opg$ is symmetric (see \cite[Lemma~2.3]{RuzzaYang}), i.e., $(v,\opg w)=(\opg v,w)$ for all $v,w\in\L$ with respect to the standard scalar product $(\ ,\,)$ on~$\L$ (see \cite{Macdonald}).
The latter can be defined by
\be \label{eq:innerproduct}
(p_{\lambda},p_\mu) \= z_\lambda\delta_{\lambda,\mu}\hspace{1pt},
\ee
where
\be 
\label{eq:monomial}
p_\lambda \defis p_{\lambda_1}\cdots p_{\lambda_\ell}
\ee
is the monomial basis of~$\Lambda$ indexed by partitions $\lambda = (\lambda_1,\dots,\lambda_\ell)$, and $z_\lambda := \prod_{m\geq 1} r_m(\lambda)!\,m^{r_m(\lambda)} $, where $r_m(\lambda):=\#\{i \mid \lambda_i=m\}$.\vspace{2pt}

\subsection{Quantum Korteweg--de Vries hierarchy.}
A construction by Buryak and Rossi \cite{Buryak,BuryakRossi1,BuryakRossi2}, also inspired by previous works in Symplectic Field Theory \cite{Eliashberg,EliashbergGiventalHofer}, provides an effective construction of quantum integrable hierarchies associated with an arbitrary Cohomological Field Theory (CohFT).
Even though the construction is completely general, in this work we restrict to the case of rank~$1$ CohFTs only.

The output of this construction, which is briefly reviewed in Section~\ref{sec:BR}, is a family of \emph{Hamiltonian densities} $g_k(\mathbf u;\epsilon)\in\Q[\u]^{\sf even}\otimes\Q[\![\epsilon]\!]$, for $k\geq -2$, possibly depending on the parameters of the CohFT (more details below).
They are determined by either an explicit formula in terms of the Double Ramification cycles in the moduli space of curves, see~\eqref{eq:defg}, or (more effectively) by a recurrence relation of order one, see~\eqref{eq:recursionBR1} and~\eqref{eq:recursionBR2}.

One of the main properties of the \emph{Hamiltonian operators} $G_k(\epsilon):=\op{g_k(\u;\epsilon)}\in\End(\L)\otimes\Q[\![\epsilon]\!]$ is that they enjoy the commutativity\footnote{It is appropriate to remark that the setting of \cite{BuryakRossi2} is more general, and the only requirement is the commutation relation $[\wh\omega_a,\wh\omega_b] = -a\delta_{a+b,0}$. For our purposes it is convenient to fix the representation~\eqref{eq:representation} (see also~\cite{Dubrovin}).}
\be
[G_j(\epsilon),G_k(\epsilon)] = 0,\qquad j,k\geq -2.
\ee

The relevance of this construction to the theory of integrable systems stems from the fact that, after suitably introducing a quantization parameter~$\hbar$ (see Remark~\ref{remark:normalization}), the Hamiltonian densities reduce, in the limit $\hbar\to 0$, to those of \emph{classical} integrable hierarchies, well-studied in the literature.
We refer to the aforementioned literature, in particular to the Introduction of \cite{Dubrovin}, for more details on this point.
Since in this work we are mainly interested in the quantum Hamiltonian densities, we opted to simplify the exposition by dropping the parameter~$\hbar$ (which can be reinstated at any time by the transformations described in Remark~\ref{remark:normalization}).

The simplest instance of this construction is provided by the quantum \emph{Korteweg--de Vries} (KdV) hierarchy (associated with the trivial CohFT), a prototypical example of an integrable system.
In this case, the first few densities read
\begin{align}
\nonumber
g_{-2}^{\sf KdV}(\u;\epsilon) \= &1,\qquad
g_{-1}^{\sf KdV}(\u;\epsilon)\=u_0\hspace{1pt},\qquad
g_{0}^{\sf KdV}(\u;\epsilon)\=\frac{u_0^2}2-\frac 1{24}\+\frac\epsilon{24} u_2\hspace{1pt},
\\ \nonumber
g_{1}^{\sf KdV}(\u;\epsilon) \= &\frac{u_0^3}6-\frac {u_0}{24}-\frac {u_2}{24}\+\frac\epsilon{24}\Bigr(u_0u_2-\frac 1{120}\Bigr)\+\left(\frac\epsilon{24}\right)^{\! 2}\frac{u_4}2,
\\ \nonumber
g_{2}^{\sf KdV}(\u;\epsilon) \= &\frac{u_0^4}{24}-\frac{u_0^2}{48}-\frac{u_0u_2}{24}+\frac 7{5760}\+\frac\epsilon{24}\left(
\frac {u_0^2u_2}2 -\frac{u_4}{30}-\frac {u_2}{24}-\frac {u_0}{120}\right) \+
\\
&\+\left(\frac\epsilon{24}\right)^{\! 2}\left(\frac{7 u_2^2}{10}+\frac{u_0u_4}2-\frac1{210}\right)\+\left(\frac\epsilon{24}\right)^{\! 3}\frac {u_6}6.
\end{align}
(For the recursion determining them see~\eqref{eq:recursionBR1} and~\eqref{eq:recursionBR2} below, and for further properties see Appendix~\ref{app}.)

The Hamiltonian operators $G_k^{\sf KdV}(\epsilon):=\op{g_k^{\sf KdV}(\u;\epsilon)}$ are polynomials of degree~$k$ in~$\epsilon$.
As already mentioned in Section~\ref{sec:1.1}, their constant term $G_k^{[0]}:= [\epsilon^0]G_k^{\sf KdV}(\epsilon)$ coincides with the Hamiltonian operators of the Hopf hierarchy, i.e., $G_k^{[0]}=G_k^{\sf Hopf}$, and the quasimodularity of $\bigl\{G_k^{[0]}\bigr\}_{\!q}$ follows by Dubrovin's result~\eqref{eq:Dubrovin} along with the Bloch--Okounkov theorem. On the other hand, their leading coefficient
$G_k^{[\infty]}:=[\epsilon^k]G_k^{\sf KdV}(\epsilon)$ is given by (see Corollary~\ref{corollary:infinitedispersion})
\be
\label{eq:Gkinf}
G_k^{[\infty]} \= \frac{c^2}2\delta_{k,0}\+
\frac {L_{2k+2}}{(-4)^k(2k+1)!!},\qquad k\geq 0,
\ee
where the operator $L_k$ is given by
\be
\label{eq:operatorLk}
L_k \defis -\frac{c^2}2\delta_{k,2}\meno\frac{B_{k}}{2k}-\frac{\mathrm{i}^{k}}2\op{u_0u_{k-2}}\=
-\frac{B_{k}}{2k}\+\sum_{j\geq 1}j^{k-1}\,p_j\pdv{}{p_j}.
\ee
Therefore, $G_k^{[\infty]}$ are diagonal on the monomial basis~\eqref{eq:monomial} of~$\Lambda$,
\be
G_k^{[\infty]}\,p_\lambda \= E^{[\infty]}_k(\lambda)\,p_\lambda,\qquad \lambda\in\partitions,\quad k\geq -2,
\ee
where $E^{[\infty]}_k:\partitions\to\Q$ are given in terms of the \emph{moment functions}~\cite{Zagier}
\be\label{eq:momentfunctions}
S_k(\lambda)\defis -\frac{B_k}{2k} \+ \sum_{i=1}^{\ell(\lambda)}\lambda_i^{k-1},\qquad k\geq 1,
\ee
by 
\be
\label{eq:infiniteepseigenvalue}
E^{[\infty]}_k \= \frac{c^2}{2}\delta_{k,0} \+ \frac{S_{2k+2}}{(-4)^k(2k+1)!!},\qquad k\geq 0.
\ee
It was observed in~\cite{Zagier} that the~{$q$-bracket} of~$S_{2k+2}$ is an Eisenstein series which is quasimodular of weight~${2k+2}$.

Let us denote by $\wt M$ the ring of quasimodular forms (with rational coefficients) on the full modular group~$\SL_2(\Z)$ (see e.g., \cite[Section~5.3]{Zagier123}), containing, for each even $k\geq 2$ the Eisenstein series
\be
\label{eq:Eisenstein}
\mathbb{G}_k\=-\frac{B_k}{2k} \+ \sum_{m,r\geq 1} m^{k-1}q^{mr}.
\ee
Then $\wt M = \bigoplus \wt M_k$ is a graded ring freely generated over the rationals by the \emph{Eisenstein series}~$\mathbb G_2\hspace{1pt},$ $\mathbb G_4$ and~$\mathbb G_6\hspace{1pt},$ where the weight of $\mathbb G_k$ is defined to be $k$.
Moreover, let $\wt M[c,\epsilon] := \wt M \otimes \Q[c,\epsilon] =: \bigoplus _k \wt M[c,\epsilon]_k\hspace{1pt},$ where we assign weight~$+1$ to~$c$ and~$-1$ to~$\epsilon$.

\begin{framed}
\begin{theorem}
\label{thm:KdV}
For the quantum KdV Hamiltonian operators 
\be
G_k^{\sf KdV}(\epsilon) \defis \op{g_k^{\sf KdV}(\u;\epsilon)}\in\End(\L)\otimes\Q[c,\epsilon], \qquad (k\geq -2)
\ee
we have
\be
\bigl\lbrace G_k^{\sf KdV}(\epsilon)\bigr\rbrace_{\! q} \,\in\,\wt M[c,\epsilon]_{k+2}\,.
\ee
\end{theorem}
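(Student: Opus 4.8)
\bigskip
\noindent\textbf{Proof strategy.}
The plan is to deduce the theorem from the general criterion of Theorem~\ref{thm:main} by checking its hypotheses for each quantum KdV density $g_k^{\sf KdV}(\u;\epsilon)$. It is convenient to first reformulate the $q$-series: since the Schur functions $s_\lambda(\p)$ form an orthonormal basis of $\L$ for the scalar product~\eqref{eq:innerproduct} and each operator $\overline g$ is symmetric, one has $\tr[\L_n]\overline g=\sum_{|\lambda|=n}(s_\lambda,\overline g\,s_\lambda)$, so that $\opq{\overline g}=\fq{f_g}$ for the function $f_g(\lambda):=(s_\lambda,\overline g\,s_\lambda)$ on $\partitions$. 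At $\epsilon=0$ the Hamiltonian operator is $G_k^{\sf Hopf}$, $f_{g_k^{\sf KdV}(0)}=E_k^{[0]}$ is the shifted symmetric function~\eqref{eq:disperionlesseigenvalue}, and quasimodularity of weight $k+2$ follows from Dubrovin's diagonalization~\eqref{eq:Dubrovin} together with the Bloch--Okounkov theorem. The content of the theorem is that the $\epsilon$-deformation keeps the situation inside the class of operators that Theorem~\ref{thm:main} recognizes as having quasimodular $q$-series of a prescribed weight, even though for $\epsilon\neq0$ the operator $G_k^{\sf KdV}(\epsilon)$ is no longer diagonal in the Schur basis and $f_{g_k^{\sf KdV}(\epsilon)}$ is not manifestly shifted symmetric.

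The verification of the hypotheses of Theorem~\ref{thm:main} proceeds by induction on $k\geq-2$, using the order-one Buryak--Rossi recursion~\eqref{eq:recursionBR1}--\eqref{eq:recursionBR2}, which produces $g_{k+1}^{\sf KdV}(\u;\epsilon)$ from $g_k^{\sf KdV}(\u;\epsilon)$ by polynomial operations on $\Q[\u]$ together with applications of $\pa_x$ and of its partial inverse on exact differential polynomials. The base cases $k=-2,-1,0$ are immediate: $\overline{g_{-2}^{\sf KdV}}=\mathrm{id}$, $\overline{g_{-1}^{\sf KdV}}=c$, and $\overline{g_{0}^{\sf KdV}}=\tfrac{c^2}2-\tfrac1{24}+\sum_{j\geq1}j\,p_j\,\pdv{}{p_j}$ (the $\tfrac{\epsilon}{24}u_2$ term drops out because $u_2=\pa_x u_1$ is exact), with $q$-series $1,\ c,\ \tfrac{c^2}2+\mathbb{G}_2$ respectively. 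For the inductive step one shows that the hypotheses of Theorem~\ref{thm:main} carve out a class of differential polynomials stable under the operations appearing in~\eqref{eq:recursionBR1}--\eqref{eq:recursionBR2}; the key tools are the vanishing $\overline{\pa_x h}=0$, the behaviour of normal ordering under multiplication by $u_0$ (as in the formulas for $\overline{u_0^2}$ and $\overline{u_0^3}$ recalled above), and the symmetry of $\overline g$. This rewriting also recasts $G_k^{\sf KdV}(\epsilon)$ in a manifestly admissible form, which is the simplification of the quantum Hamiltonian operators promised in the introduction.

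It remains to track weights. Assigning weight $j$ to $u_j$ for $j\geq1$, weight $-1$ to $\epsilon$, and weight $+1$ to $c$, one checks that the recursion raises $k$ by one and the total weight by one, so that, up to the constant corrections built into the densities (such as the $-\tfrac1{24}$ above, which cancels the constant term of $\mathbb{G}_2$ in $\opq{\overline{g_0^{\sf KdV}}}$), $G_k^{\sf KdV}(\epsilon)$ has homogeneous weight $k+2$; combined with the previous step and the fact that $\wt M$ is concentrated in even weights and freely generated by $\mathbb{G}_2,\mathbb{G}_4,\mathbb{G}_6$, this yields $\opq{G_k^{\sf KdV}(\epsilon)}\in\wt M[c,\epsilon]_{k+2}$. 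Two consistency checks come for free: the coefficient $[\epsilon^0]$ is the Hopf case treated above, and the leading coefficient $[\epsilon^k]G_k^{\sf KdV}(\epsilon)=G_k^{[\infty]}$ of~\eqref{eq:Gkinf} is diagonal on the monomial basis with eigenvalue the moment function $S_{2k+2}$ up to normalization, so its $q$-series is the Eisenstein series $\mathbb{G}_{2k+2}$, of weight $2k+2=(k+2)+k$, exactly as the bigrading predicts.

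The main obstacle is the inductive step: matching the Buryak--Rossi recursion — equivalently, the Double Ramification cycle formula~\eqref{eq:defg} — to the structural hypotheses of Theorem~\ref{thm:main}, uniformly in the $\epsilon$-degree. The two extreme $\epsilon$-layers are already understood (Dubrovin's diagonalization at $\epsilon^0$ and the explicit diagonal operator $G_k^{[\infty]}$ at $\epsilon^k$), so the real work is to control every intermediate coefficient in $\epsilon$ and to check that the recursion never leaves the admissible class; here the effectivity of Theorem~\ref{thm:main}, which reduces quasimodularity to a finite, verifiable condition on the differential polynomial, is precisely what makes the induction go through.
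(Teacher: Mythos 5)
Your overall strategy (apply Theorem~\ref{thm:main} and induct along the Buryak--Rossi recursion~\eqref{eq:recursionBR1}--\eqref{eq:recursionBR2}) points in the right direction, and your base cases and consistency checks at $\epsilon^0$ and $\epsilon^k$ are fine. But the inductive step, which is the entire content of the theorem, is left as an assertion (``one shows that the hypotheses of Theorem~\ref{thm:main} carve out a class \dots stable under the operations appearing in the recursion''), and as stated it cannot close. The homogeneity criterion in Theorem~\ref{thm:main} is not a generic ``finite, verifiable condition'': it says precisely that $\opq{\overline{g}}$ is homogeneous of weight $w$ when $\B^{-1}g$ is homogeneous of weight $w$ (with $u_i$ of weight $i+1$, not weight $i$ as you write). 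You never invoke the operator $\B$ at all. This matters because the densities $g_k^{\sf KdV}$ themselves are \emph{not} homogeneous, and not merely ``up to constant corrections'': e.g.\ $g_2^{\sf KdV}$ contains the non-constant lower-weight term $-u_0^2/48$, which is exactly $\bigl(-\tfrac1{24}\partial_{u_0}^2\bigr)\tfrac{u_0^4}{24}$, i.e.\ the correction generated by $\B$ from the homogeneous leading part. Correspondingly, the commutator operator $\mathcal R = [\,\cdot\,,G_1(\epsilon)]$ appearing in~\eqref{eq:recursionBR2} does not preserve homogeneity (its second-order part lowers the weight), so a naive weight count along the recursion fails.

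The paper's resolution is to pass to the \emph{reduced} densities $\wt g_k := \B^{-1}g_k$ and to prove (Lemma~\ref{lemma:simplifiedrecursion}) that these satisfy the same recursion with $\mathcal R$ replaced by its homogeneity-preserving part $\mathcal R_1$; equivalently, $\B^{-1}\mathcal R\,\B=\mathcal R_1$. This conjugation identity is the real work: its proof reduces to a nontrivial cubic Bernoulli-number identity obtained from Skoruppa's theorem on products of Eisenstein series. Once it is established, homogeneity of $\wt g_k$ of weight $k+2$ follows from a routine commutator argument with the grading operator (Corollary~\ref{corollary:homogeneous}), and Theorem~\ref{thm:main} then gives $\opq{G_k^{\sf KdV}(\epsilon)}=\opq{\overline{\B\wt g_k}}\in\wt M[c,\epsilon]_{k+2}$. (The paper in fact carries this out for the ILW hierarchy and specializes to KdV at $\mu=0$, but that is a cosmetic difference.) Your proposal is missing this mechanism entirely, so the gap is genuine rather than a matter of detail.
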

\end{framed}

The proof is given in Section~\ref{sec:proof2}, for the more general case of the quantum \emph{Intermediate Long Wave} hierarchy, which is a generalization of the KdV hierarchy \cite{BuryakRossi2} (see Theorem~\ref{thm:ILW}).
The key step in the proof is a general criterion (Theorem~\ref{thm:main} below) for quasimodularity of homogeneous weight which applies to operators of the form~$\opg$ for $g\in\Q[\u]$.
Explicit examples for $k\leq 3$ in terms of Eisenstein series are included in Appendix~\ref{app2}.

We expect that Theorem~\ref{thm:KdV} holds true for all rank~$1$ quantum Double Ramification integrable hierarchies.
Namely, the quantum KdV Hamiltonian densities are the special case $s_i=0$ of a more general hierarchy of Hamiltonian densities $g_k(\u;\epsilon,\mathbf s)$, depending on parameters $\mathbf s = (s_1,s_3,s_5,\ldots)$ (see, Section~\ref{sec:BR}).
There exist different possible normalizations for these Hamiltonian densities (see, \cite[eq.~5.3]{BuryakDubrovinGuereRossi}), and by Theorem~\ref{thm:main} below in all normalizations $\opq{\op{g_k(\epsilon,\mathbf s)}}$ belongs to $\wt M[c][\![\mathbf s,\epsilon]\!]$.
We expect that there exists a convenient normalization such that $\opq{\op{g_k(\epsilon,\mathbf s)}}$ is quasimodular \emph{of homogeneous weight} (where $s_k\,,$ for $k$ odd, is assigned weight $k$).

Moreover, as suggested to us by Don Zagier, we expect that the $q$-series~\eqref{eq:qseries} associated to arbitrary compositions of the quantum KdV operators give rise to quasimodular forms as well, or, even stronger, that the eigenvalues of the quantum KdV operators are shifted symmetric functions of homogeneous weight.
Namely, by~\cite{RuzzaYang} there exists a simultaneous basis~$r_\lambda(\p;\epsilon)\in\L_{|\lambda|}[\![\epsilon]\!]$ of eigenfunctions~$E_k(\lambda;\epsilon)$ for~$G_k^{\sf KdV}(\epsilon)$ for all $\lambda\in\partitions$. 
 Then, we expect that
\begin{align}\label{conj}
E_k(\lambda;\epsilon) \,\in\, \Q[c,Q_2,Q_3,\ldots][\![\epsilon]\!]_{k+2}\,,
\end{align}
where $\epsilon$ is assigned weight $-1$, $c$ weight $+1$, and $Q_k$ weight $k$.
Note $E_k(\lambda;0) = E_k^{[0]}(\lambda)$ is the shifted symmetric function in~\eqref{eq:disperionlesseigenvalue} and, as a consequence of Theorem~\ref{thm:KdV}, we have
$\langle E_k(\lambda;\epsilon) \rangle_q = \bigl\lbrace G_k^{\sf KdV}(\epsilon)\bigr\rbrace_{\! q} \in \wt M[c,\epsilon]_{k+2}\,.$ We have numerical evidence for~\eqref{conj} in a few instances, and we hope to return to it in a later publication.

\subsection{A criterion for homogeneous quasimodularity}

We will show that the~$q$-series~$\opq{\opg}$ (for $g\in \Q[\u]$) is always a quasimodular form \emph{of mixed weight}.
Moreover, we provide a criterion for the quasimodularity of \emph{homogeneous weight} for the~$q$-series~$\opq{\opg}\hspace{1pt}.$
To state it, we assign weight~${k+1}$ to $u_k\hspace{1pt},$ so that $\Q[\u]$ (as well as its subspaces~$\Q[\u]^{\sf even}$ and~$\Q[\u]^{\sf odd}$ defined at page \pageref{pageevenodd}) become graded algebras\footnote{It might seem more natural to write $u_{k+1}$ for what is called $u_k$ here, but in order to adhere to standard notation for the Hamiltonian densities introduced in the next section, we do not do so. In fact, recall that in the background there is a second weight operator, which gives rise to the even and odd part of $\Q[\u]$, and assigns weight $k$ to $u_k\,$. 
}. Moreover, let $\wt M[c] := \wt M \otimes \Q[c] =:\bigoplus_{k}\wt M[c]_k\hspace{1pt},$ be the polynomial ring in~$c$ and~$\epsilon$ over the graded ring of quasimodular forms, graded by the quasimodular weight and by assigning weight~$+1$ to~$c$.

Moreover, let~$\B$ be the linear operator on~$\Q[\u]$ defined by
\be
\label{eq:B}
\B \defis \exp\Bigl(-\frac 12\sum_{{i,j\geq 0}} (-1)^{\frac{i-j}{2}}\,\frac{B_{i+j+2}}{i+j+2}\,\pdv{^2}{u_i\partial u_j}\Bigr),\qquad B_k = k\text{th Bernoulli number.}
\ee
\begin{framed}
\begin{theorem}\label{thm:main}\mbox{}\\[-20pt]
\begin{enumerate}[{\upshape (i)}]\itemsep0pt
\item  For any $g\in\Q[\u]$ we have $\opq{\opg} \in \wt M[c]$, 
\item For any $g\in\Q[\u]^{\sf odd}$, we have $\opq{\opg}=0$,
\item The mapping $\Q[\u] \to \widetilde{M}[c]$
\begin{align}\label{eq:morphism}
g &\mapsto \left\lbrace\op{\B g}\right\rbrace_{\! q}
\end{align}
is a morphism of graded vector spaces.
\end{enumerate}
\end{theorem}
\end{framed}

The proof is given in Section~\ref{sec:proof1} and builds on the previous work \cite{vanIttersumSymmetric} of the first author. In the special case~${c=0}$, the theorem states the following.
\begin{corollary} Let $g\in \Q[\u]$ be such that $\B^{-1} g$ is of homogeneous weight. Then, $\opq{\opgzero}$ is quasimodular of homogeneous weight. 
\end{corollary}

\paragraph{Holomorphic anomaly equation.}
Just as in \cite{vanIttersum1} we answer the question when $\opq{\opgzero}$ is actually modular (rather than quasimodular).
Namely, if we, instead, consider~$c$ to be a formal variable, the \emph{holomorphic anomaly equation} of~$\opq{\opg}$ (determining the failure of modularity) can be expressed as
\be
\label{eq:hae}
-2\,\mathfrak{d} \opq{\opg} \= \pdv{^2}{c^2} \opq{\opg}\hspace{1pt},
\ee
(cf.~Proposition~\ref{prop:HAE}) where~$\mathfrak{d}$ is the unique derivation on quasimodular forms which vanishes on modular forms and for which $\mathfrak{d}(\mathbb{G}_2)=-\frac{1}{2}$, where $\mathbb{G}_2=-\frac{1}{24}+\sum_{m,r\geq 1} m\,q^{mr}$ is the Eisenstein series of weight~$2$.
Together with the differential operator~$q\pdv{}{q}$ and the weight operator, this derivation~$\mathfrak{d}$ gives an action of~$\sltwo$ on quasimodular forms.
Note that~\eqref{eq:hae} can equivalently be written as $-2\mathfrak{d} \opq{\opg} =  \lbrace\op{\partial^2 g/\partial u_0^2}\rbrace_{\!q}\hspace{1pt}$ (cf.~Lemma~\ref{lemma:simple}). 
Hence, $\opq{\opgzero}$ is modular precisely if $\lbrace\opzero{\partial^2 g/\partial u_0^2}\rbrace_{\!q}=0$.

\begin{remark}
\label{remark:hae}
By~\eqref{eq:disperionlesseigenvalue} the holomorphic anomaly equation~\eqref{eq:hae} can be explicitly checked in the limit $\epsilon\to 0$, because we know by \cite[Theorem~3]{Zagier} that $\mathfrak{d}\fq{Q_j}=-\frac{1}{2}\fq{Q_{j-2}}$ for all $j\geq 2$.
In the limit $\epsilon\to\infty$, the holomorphic anomaly equation~\eqref{eq:hae} is consistent with~\eqref{eq:infiniteepseigenvalue}, because, as mentioned earlier, in~\cite{Zagier} it is shown that the~{$q$-bracket} of~$S_{2k+2}$ is an Eisenstein series of weight~${2k+2}$, and every Eisenstein series of weight $2k+2$ with $k>0$ is actually modular. 
\end{remark}

\paragraph{Functions on partitions having the same $q$-bracket.}
The~{$q$-bracket}~\eqref{eq:BOqseries} is not only a specialization of the~$q$-series~\eqref{eq:qseries}; it is also related to this $q$-series by the following construction.
For any basis~$b_\lambda(\p)$ of~$\L$, indexed by partitions $\lambda\in\partitions$ and satisfying $b_\lambda\in\Lambda_{|\lambda|}\hspace{1pt},$ and for any $G\in\End(\L)$ satisfying~\eqref{eq:condition} we have the equality $\opq G = \fq f$ where~$f(\lambda)$ is defined to be the~$\lambda$th diagonal entry of the matrix representation of~$G$, i.e., $f(\lambda) := (b_\lambda,Gb_\lambda)/(b_\lambda,b_\lambda)$, where the inner product is defined by~\eqref{eq:innerproduct}. Hence, by studying $\opq{G}$ for $G\in \End(\L)$, we study the $q$-brackets~$\fq{f}$ for many functions~$f$ at the same time. 

Incidentally, this reflects the fact that different functions~$f:\partitions\to\Q$ can have the same~{$q$-bracket}.
For example, as observed in \cite[Section~13]{ChenMollerZagier} the moment functions~\eqref{eq:momentfunctions}
and the shifted symmetric functions\footnote{These shifted symmetric functions have a nice interpretation as the moments of the hook-lengths of partitions, see \cite[Section~13]{ChenMollerZagier}.} (in terms of the generators $Q_k\hspace{1pt},$ defined by~\eqref{eq:Qk})
\be
T_k(\lambda) \defis \frac{(k-2)!}{2} \sum_{i=0}^{k} (-1)^i \, Q_i(\lambda) \, Q_{k-i}(\lambda),\qquad k\geq 2,
\ee
are two instances of functions on partitions for which, in case $k$ is even, the $q$-bracket equals the Eisenstein series~\eqref{eq:Eisenstein}. 
This particular example can be explained as $T_k$ is the so-called \emph{M\"oller transform} of \cite{Zagier} of $S_k\hspace{1pt}.$ Now, the M\"oller transform corresponds to the change of coordinates between the monomial basis~$p_\lambda$ and the Schur basis~$s_\lambda$ of~$\Lambda$.
Indeed, we have\footnote{The second equality follows from \cite[Section~13]{ChenMollerZagier} after expanding $s_\lambda$ in the monomial basis of~$\Lambda$.}
\be
S_{k}(\lambda) = \frac{(p_\lambda,L_{k}p_\lambda)}{(p_\lambda,p_\lambda)}, \qquad T_{k}(\lambda) = \frac{(s_\lambda,L_ks_\lambda)}{(s_\lambda,s_\lambda)}, \qquad k\geq 2 \text{ even},
\ee
where $L_k$ is the operator defined in~\eqref{eq:operatorLk}.

\subsection*{Outline of the rest of the paper.}
In Section~\ref{sec:proof1} we prove Theorem~\ref{thm:main} and the holomorphic anomaly equation~\eqref{eq:hae}.
In Section~\ref{sec:BR} we review the construction of Double Ramification quantum integrable hierarchies.
In Section~\ref{sec:proof2} we prove Theorem~\ref{thm:ILW} which is a generalization of Theorem~\ref{thm:KdV}.
In Appendix~\ref{app} we give explicit formulas for the limits $\epsilon\to 0,\infty$ of the quantum KdV Hamiltonian densities.
In Appendix~\ref{app2} we illustrate our two main theorems by giving tables of quasimodular forms obtained as $\lbrace G_{k}^{\sf KdV}(\epsilon)\rbrace_{\! q}$ and as $\opq{\opg}$ for some $g\in \Q[\u]$. 

\section{Partitions and quasimodular forms}\label{sec:proof1}
\subsection{Proof of Theorem~\ref{thm:main}}
First of all, we refine the $q$-bracket~\eqref{eq:BOqseries} as follows. Define the~\emph{$\underline{x}$-bracket}\footnote{In \cite{vanIttersumSymmetric} this was called the $\underline{u}$-bracket.} of a function $f:\partitions\to \Q$ by 	
\be
\langle f\rangle_{\underline{x}} \= \frac{\sum_{\lambda\in \partitions} f(\lambda)\, x_\lambda }{\sum_{\lambda \in \partitions} x_\lambda}\,\in\, \Q[\![x_1,x_2,x_3,\ldots]\!]
\quad\quad\quad (x_\lambda=x_{\lambda_1} x_{\lambda_2}\cdots) .
\ee
Then, one has~$\fq{f}=\langle f \rangle_{(q, q^2, q^3,\ldots)}$.
Observe that the~$\underline{x}$-bracket defines an isomorphism of vector spaces  (but not of algebras!)
\be\QtoP \xrightarrow{\sim} \Q[\![x_1,x_2,x_3,\ldots]\!], \quad \quad f \mapsto \langle f \rangle_{\underline{x}}\, .\ee
Given $f,g\in \QtoP$ we define their \emph{induced product}~${f\blok g}$ by
\be \label{eq:inducedprod}\langle f\blok g \rangle_{\underline{x}}\= \langle f\rangle_{\underline{x}}\,\langle g\rangle_{\underline{x}}\, ,\ee
where the product of~$\langle f\rangle_{\underline{x}}$ and~$\langle g\rangle_{\underline{x}}$ is the usual product of power series. In particular, $\fq{f\blok g}=\fq{f}\fq{g}\hspace{1pt}.$ 

Now, given a partition $\lambda$, write $r_m(\lambda)=\#\{i \mid \lambda_i=m\}$, and similarly for $\mathbf{k}\in (\Z_{>0})^n$ we write $r_m(\mathbf{k}) = \#\{i \mid k_i=m\}$. Note (for example, using \cite[Proposition 3.1.4]{vanIttersumSymmetric}) that
\begin{align}\label{eq:rmxbrac}
\langle r_m \rangle_{\underline{x}} = \sum_{r=1}^\infty x_m^r = \frac{x_m}{1-x_m}.
\end{align}
In the proof of Theorem~\ref{thm:main}, we will need the following result.
\begin{lemma}\label{lem:1}
Given $n\geq 0$ and $\mathbf{k}\in (\Z_{>0})^n$, we have
\be \Bigl\langle \prod_{m\geq 1} \binom{r_m(\lambda)}{r_m(\mathbf{k})}\Bigr\rangle_{\!q} \= \prod_{k\in \mathbf{k}}\frac{q^k}{1-q^k}.
\ee
\end{lemma}
\begin{proof}
By \cite[Proposition 7.2.3(ii)]{vanIttersumSymmetric} we have
\be \prod_{m\geq 1} \binom{r_m(\lambda)}{r_m(\mathbf{k})} \= \bigblok_{k\in \mathbf{k}}  r_k(\lambda). \ee
The result follows from \eqref{eq:inducedprod} and \eqref{eq:rmxbrac}.
\end{proof}

\begin{proof}[Proof of Theorem~\ref{thm:main}]
Recall that the $p_\lambda=p_{\lambda_1}\cdots p_{\lambda_r}$ for $\lambda \in \partitions$ form a basis for~$\Lambda$. 
The main observation is that with respect to this basis, for all $\mathbf{k}\in \N^r, \mathbf{l}\in \N^s$ and $\lambda \in \partitions$ we have
\begin{align} [p_\lambda]\,p_{k_1}\cdots p_{k_r}&\, \pdv{^n}{p_{l_1}\cdots \partial p_{l_s}} p_\lambda \=  
 \begin{cases} \prod_m\binom{r_m(\lambda)}{r_m(\mathbf{l})}\, r_m(\mathbf{l})!& \mathbf{k} \text{ is a permutation of } \mathbf{l}, \\
0 & \text{else,} \end{cases}
 \end{align}
where~$[p_\lambda]$ indicates we extract the coefficient of~$p_\lambda\hspace{1pt}.$ Here, $r_m(\lambda)$ and $r_m(\mathbf{l})$ are as defined above. 

Given a monomial $u_{\mathbf{a}} = u_{a_1}\cdots u_{a_n}\in \Q[\u]$, consider
\be
\op{u_{\mathbf{a}}} \=  \sum_{\substack{\mathbf{k}\in \Z^n\\ |\mathbf{k}|=0}} (\mathrm{i}\mathbf{k})^{\mathbf{a}}\,c^{\#\{j \mid k_j=0\}}\prod_{j\mid k_j>0}p_{k_j}\prod_{j \mid k_j<0}\biggl(-k_j\pdv{}{p_{-k_j}}\biggr),
\ee
where $(\mathrm{i}\mathbf{k})^{\mathbf{a}} = \prod_{j=1}^n (\mathrm{i}k_j)^{a_j}$. As the diagonal contribution of this operator is zero unless we have the equality of multisets $\{k_j \mid k_j>0\}=\{-k_j \mid k_j<0\}$, we can partition the set of indices in pairs. That is, write~$\Pi_2(n,m)$ for the partitions of all $B\subset\{1,2,\ldots,n\}$ in $m$~sets of two elements, i.e., $\pi\in \Pi_2(n,m)$ can be written as $\pi=\{A_1,\ldots,A_{m}\}$ with $|A_i|=2$ for all~$i$ and $\bigcup A_i=B$ for some $B\subset\{1,\ldots,n\}$ with $|B|=2m$. Often, we write $B(\pi)$ for $B$ to stress the dependence of $B$ on $\pi$. 
 Observe that $|\Pi_2(n,m)|=\binom{n}{2m}\cdot(2m-1)!!$. For a set~$S$, write $a_S=\sum_{i\in S} a_i$ and $s(\mathbf{a},S)=\i^{a_S}\sum_{i\in S} (-1)^{a_i}$. We write $|\mathbf{a}|$ for $a_{\{1,\ldots,n\}} = \sum_{i=1}^n a_i\hspace{1pt}.$
Also, for $\mathbf{k}\in \Z^m$ write $z_{\mathbf{k}} = \prod_{m} r_m(\mathbf{k})!\, m^{r_m(\mathbf{k})}$.
Then, 
\begin{align}
\nonumber
[p_\lambda]\,\op{u_{\mathbf{a}}}\, p_\lambda &\= 
[p_\lambda]\sum_{m=0}^{\lfloor n/2 \rfloor}c^{n-2m}\!\sum_{\substack{\pi\in \Pi_2(n,m)\\ a_{B(\pi)} = |\mathbf{a}|}} \sum_{k_{A_1},\ldots,k_{A_m}\geq 1}\! \frac{1}{z_{\mathbf{k}}}\prod_{A\in \pi} \left(s(\mathbf{a},A)\,k_A^{a_{A}+2}\,p_{k_A}\right)\prod_{A\in \pi}\pdv{}{p_{k_A}} p_{\lambda}\\
&\=\sum_{m=0}^{\lfloor n/2 \rfloor}c^{n-2m}\sum_{\substack{\pi\in \Pi_2(n,m)\\ a_{B(\pi)} = |\mathbf{a}|}}\sum_{k_{A_1},\ldots,k_{A_m}\geq 1} \prod_{A\in \pi} s(\mathbf{a},A)\,k_A^{a_{A}+1}\,\prod_{m} \binom{r_m(\lambda)}{r_m(\mathbf{k})}.
\end{align}
From Lemma~\ref{lem:1}, it follows that
\begin{align} 
\nonumber
\opq{\op{u_{\mathbf{a}}}} 
&\=
\sum_{m=0}^{\lfloor n/2 \rfloor}c^{n-2m}\sum_{\substack{\pi\in \Pi_2(n,m)\\ a_{B(\pi)} = |\mathbf{a}|}}\sum_{k_{A_1},\ldots,k_{A_m}\geq 1} \prod_{A\in \pi} s(\mathbf{a},A)\,k_A^{a_{A}+1}\,  \frac{q^{k_A}}{1-q^{k_A}}\\
&\= \label{eq:trace2}
\sum_{m=0}^{\lfloor n/2 \rfloor}c^{n-2m}\sum_{\substack{\pi\in \Pi_2(n,m)\\ a_{B(\pi)} = |\mathbf{a}|}} \prod_{A\in \pi}\,s(\mathbf{a},A) \Bigl(\frac{B_k}{2k}+\mathbb{G}_{a_A+2}\Bigr),
\end{align}
where $\mathbb{G}_k$ for $k\geq 2$ is the holomorphic Eisenstein series~\eqref{eq:Eisenstein} of weight~$k$. 

Observe that if $\mathbf{a}=(i,j)$ for some $i,j\in\Z$, then 
\be
s(\mathbf{a},\{1,2\})
\=(-1)^{\frac{i+j}{2}}\bigl((-1)^{i}+(-1)^j\bigr)
\=
\begin{cases} 2(-1)^{\frac{i-j}{2}} & i\equiv j \mod 2\\
0 & \text{else.}
\end{cases}
\ee
Hence, $s(\mathbf{a},A)$ vanishes if ${a_A+2}$ is odd. As the Eisenstein series~$\mathbb{G}_k$ are quasimodular for even~$k$, we have shown that $\opq{\opg}\in \wt M[c]$ for all $g\in \Q[\u]$. 

Next, assume $u_{\mathbf{a}}\in \Q[\u]^{\sf odd}$, i.e., $|\mathbf{a}|$ is odd. Then, in~\eqref{eq:trace2} we see that every summand of $\opq{\op{u_{\mathbf{a}}}}$ contains a factor $s(\mathbf{a},A)$ for which $a_A$ is odd. Hence, such a factor $s(\mathbf{a},A)$ vanishes.
That is, $\opq{\op{u_{\mathbf{a}}}}=0$. Therefore, by linearity, $\opq{\opg}=0$ for all $g\in \Q[\u]^{\sf odd}$. 

By definition of~$\B$ we deduce from~\eqref{eq:trace2} that
\begin{align}\label{eq:opasQMF}
\left\lbrace\op{\B u_{\mathbf{a}}} \right\rbrace_{\! q} \=\sum_{m=0}^{\lfloor n/2 \rfloor}c^{n-2m}\sum_{\substack{\pi\in \Pi_2(n,m)\\ a_{B(\pi)} = |\mathbf{a}|}}\prod_{A\in \pi} s(\mathbf{a},A)\, \mathbb{G}_{a_A+2}\hspace{1pt},
\end{align}
which is of homogeneous weight $|\mathbf{a}|+n$ (where $n$ is the length of the vector~$\mathbf{a}$). 
Therefore, it follows that $g \mapsto \lbrace\op{\B g}\rbrace_{\! q}$ is a morphism of graded vector spaces. 
\end{proof}
Observe that as every quasimodular form has a (highly non-unique) representation as a polynomial in Eisenstein series, it follows that the mapping $g \mapsto \lbrace\opzero{\B g}\rbrace_{\! q}$ from $\Q[\u]$ to $\widetilde{M}$ is surjective.

\subsection{Holomorphic anomaly equation}
A quasimodular form is a holomorphic function $f=f(\tau)$ of $\tau$ in the complex upper half plane, admitting a Fourier series ($q$-series, $q=\e^{2\pi\i\tau}$) at infinity and such that for all $\left(\begin{smallmatrix} a & b \\ c & d\end{smallmatrix}\right)\in\SL_2(\Z)$
\begin{align}\label{eq:transfo} (c\tau+d)^{-k} f\Bigl(\frac{a\tau+b}{c\tau+d}\Bigr) \= \sum_{j=0}^p \frac{(\mathfrak{d}^jf)(\tau)}{j!} \Bigl(\frac{1}{2\pi i}\frac{c}{c\tau+d}\Bigr)^j,\qquad \Im \tau>0,
\end{align}
where $\mathfrak{d}$ is the derivation defined in the introduction.
The transformation~\eqref{eq:transfo} of the quasimodular forms in Theorem~\ref{thm:main} is determined by the holomorphic anomaly equation expressed in Proposition~\ref{prop:HAE} below.

We first have a simple lemma.

\begin{lemma}
\label{lemma:simple}
For any $g\in\Q[\u]$ we have $\pdv{}{c}\opg\=\op{\pdv{g}{u_0}}$.
\end{lemma}
\begin{proof}
For any $g(\u)$ the operator $\op{g(u_0-c,u_1,\dots)}$ is independent of~$c$ by construction.
Hence
\be 
\opg \= \sum_{s\geq 0}\frac{c^s}{s!}\,\op{(\pa_{u_0})^sg(u_0-c,u_1,\dots)}
\ee
and so
\begin{align}
\nonumber
\pdv{}{c}\,\opg \={}&\sum_{s\geq 1}\frac{c^{s-1}}{(s-1)!}\op{(\pa_{u_0})^sg(u_0-c,u_1,\dots)} 
\\
\={}& \sum_{s\geq 0}\frac{c^s}{s!}\op{(\pa_{u_0})^s(\pa_{u_0}g)(u_0-c,u_1,\dots)} = \oplarge{\pdv{g}{u_0}g},
\end{align}
as claimed.
\end{proof}

\begin{proposition}
\label{prop:HAE}
For all $g\in \Q[\u]$, we have
\be
 -2\,\mathfrak{d} \opq{\opg} \= \pdv{^2}{c^2} \opq{\opg} \= \opq{\oplarge{\pdv{^2g}{u_0^2}}}.
\ee
\end{proposition}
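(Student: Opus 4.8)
The plan is to establish the two claimed equalities separately. The second equality, $\pdv{^2}{c^2}\opq{\overline g(c)} = \opq{\overline{\pa^2 g/\pa u_0^2}}$, I would prove directly from the explicit trace formula~\eqref{eq:trace2} obtained in the proof of Theorem~\ref{thm:main}. By linearity it suffices to treat a monomial $u_{\mathbf a} = u_{a_1}\cdots u_{a_n}$. On the one hand, $\pa^2 u_{\mathbf a}/\pa u_0^2$ is (up to the combinatorial multiplicity counting ordered pairs of indices $i,j$ with $a_i=a_j=0$) a sum of monomials $u_{\mathbf a'}$ of length $n-2$ obtained by deleting two zero-entries; applying~\eqref{eq:trace2} to each such $u_{\mathbf a'}$ produces exactly the terms of~\eqref{eq:trace2} for $u_{\mathbf a}$ in which the two deleted indices $i,j$ are forced to have $a_i=a_j=0$ and are \emph{not} paired together by $\pi$ but contribute the factor $c$. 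On the other hand, $\pa^2/\pa c^2$ applied to the polynomial-in-$c$ expression~\eqref{eq:trace2} brings down the factor $(n-2m)(n-2m-1)$ from $c^{n-2m}$, which is precisely the number of ordered pairs of the $n-2m$ unpaired (hence $c$-producing) indices. Matching these two bookkeeping computations — both are elementary counts over $\Pi_2(n,m)$ with a distinguished ordered pair of unpaired zero-indices — gives the identity. The only subtlety is keeping the Bernoulli/Eisenstein factors $s(\mathbf a,A)\bigl(\tfrac{B_{a_A+2}}{2(a_A+2)} + \mathbb G_{a_A+2}\bigr)$ attached to the surviving pairs $A\in\pi$ straight; since those pairs and their weights are untouched by either operation, they simply factor out.

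For the first equality, $-2\,\mathfrak d\opq{\overline g(c)} = \pdv{^2}{c^2}\opq{\overline g(c)}$, I would use the second equality to reduce it to the claim $-2\,\mathfrak d\opq{\overline h(c)} = 0$ vanishing is not what we want; rather, combining, it suffices to show $-2\,\mathfrak d\,\{\overline{u_{\mathbf a}}\}_q = \{\overline{\pa^2 u_{\mathbf a}/\pa u_0^2}\}_q$ directly on the Eisenstein-series side. Recall from~\cite[Theorem~3]{Zagier} (cited in Remark~\ref{remark:hae}) that $\mathfrak d$ acts on quasimodular forms as the Serre-type derivation with $\mathfrak d(\mathbb G_2) = -\tfrac12$ and $\mathfrak d(\mathbb G_k)=0$ for $k\geq 4$ (more precisely $\mathfrak d$ vanishes on all modular forms, and $\mathbb G_k$ for even $k\geq 4$ is modular), and $\mathfrak d$ is a derivation. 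Apply $\mathfrak d$ to~\eqref{eq:trace2}: by the Leibniz rule, $\mathfrak d$ hits the product $\prod_{A\in\pi} s(\mathbf a,A)\bigl(\tfrac{B_{a_A+2}}{2(a_A+2)}+\mathbb G_{a_A+2}\bigr)$ one factor at a time, and a nonzero contribution arises only when it hits a factor with $a_A+2 = 2$, i.e. $a_A = 0$, meaning the pair $A$ consists of two indices $i,j$ with $a_i=a_j=0$ (so that $s(\mathbf a,A) = 2$ and the factor is $\mathbb G_2 - \tfrac1{24}$, whose $\mathfrak d$ is $-\tfrac12$). Thus $\mathfrak d\,\{\overline{u_{\mathbf a}}\}_q$ is a sum over choices of a distinguished pair $A_0=\{i,j\}\subset\pi$ with $a_i=a_j=0$, contributing $-\tfrac12\cdot 2 = -1$ times the remaining product; collapsing the choice of this pair, one recognizes the result as $-1$ times $\{\overline{u_{\mathbf a}}\}_q$ with the two zero-indices $i,j$ removed and no longer paired — which is exactly $-\tfrac12\{\overline{\pa^2 u_{\mathbf a}/\pa u_0^2}\}_q$ after accounting for the factor $2$ in the definition of $s$ on a zero-pair and the ordered-versus-unordered counting in $\pa^2/\pa u_0^2$. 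Hence $-2\,\mathfrak d\,\{\overline{u_{\mathbf a}}\}_q = \{\overline{\pa^2 u_{\mathbf a}/\pa u_0^2}\}_q$, and by linearity the proposition follows.

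The main obstacle I anticipate is purely combinatorial bookkeeping: reconciling the three different multiplicity conventions — the ordered sum $s(\mathbf a, A) = \i^{a_A}\sum_{i\in A}(-1)^{a_i}$, which on a zero-pair gives the factor $2$; the binomial/partition counting built into $\Pi_2(n,m)$; and the ordered double derivative $\pa^2/\pa u_0^2$, which on $u_0^r$ produces $r(r-1)u_0^{r-2}$ — so that the factors of $2$, the $(n-2m)(n-2m-1)$ from differentiating $c^{n-2m}$, and the choice-of-distinguished-pair counts all match on the nose. I would organize this by fixing a monomial $u_0^r\,u_{b_1}\cdots u_{b_{n-r}}$ with all $b_i\neq 0$ and tracking how the $r$ zero-slots are distributed among "paired to another zero" (giving $\mathbb G_2$ factors), "paired to a nonzero" (impossible, since $s$ vanishes unless parities match — actually $a_A=0$ is needed, so zeros pair only with zeros), and "unpaired" (giving $c$), reducing everything to a single identity about how $\mathfrak d$, $\pa^2/\pa c^2$, and $\pa^2/\pa u_0^2$ each act on the generating function $\sum_m c^{n-2m}\binom{\cdots}{\cdots}(\mathbb G_2 - \tfrac1{24})^{(\text{number of zero-pairs})}\cdot(\text{stuff independent of the zeros})$. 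Once this normal form is set up the verification is a one-line comparison of coefficients.
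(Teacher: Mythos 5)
Your proposal is correct in substance, and for the first equality it takes essentially the paper's own route: apply $\mathfrak{d}$ to the explicit Eisenstein-series expansion (the paper uses \eqref{eq:opasQMF}, you use \eqref{eq:trace2}; the difference is only in constant terms, which $\mathfrak{d}$ annihilates), observe that only pairs $A$ with $a_A=0$ contribute, via $\mathfrak{d}\,\mathbb{G}_2=-\tfrac12$ and $s(\mathbf{a},A)=2$, and match the count of distinguished pairs against the $\binom{n-2m+2}{2}$ that $\pdv{^2}{c^2}$ produces from $c^{n-2m+2}$. Where you genuinely diverge is the second equality: you prove $\pdv{^2}{c^2}\opq{\overline g}=\opq{\overline{\pa^2g/\pa u_0^2}}$ by a combinatorial matching inside \eqref{eq:trace2} (deleting two zero-entries of $\mathbf{a}$ versus differentiating $c^{n-2m}$ twice), whereas the paper proves the stronger operator identity $\pdv{\overline g}{c}=\overline{\pa g/\pa u_0}$ directly in $\End(\L)$, by writing $\overline g(c)=\sum_{s\geq 0}\frac{c^s}{s!}\,\overline{(\pa_{u_0})^sg(u_0-c,u_1,\dots)}$ and using that $\overline{g(u_0-c,u_1,\dots)}$ is independent of $c$. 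The paper's lemma is cleaner, holds before taking any trace, and spares exactly the three-way multiplicity reconciliation you flag as the main obstacle; your bijective bookkeeping (ordered pairs of unpaired zero-positions of $\pi$ versus the $r(r-1)$ ordered pairs of deleted $u_0$'s) does close, since the constraint $a_{B(\pi)}=|\mathbf{a}|$ forces every unpaired index to be a zero-position, but it is precisely the verification the operator-level lemma makes unnecessary. Two cosmetic points: the weight-two factor in \eqref{eq:trace2} is $\mathbb{G}_2+\tfrac1{24}$ rather than $\mathbb{G}_2-\tfrac1{24}$ (harmless, as $\mathfrak{d}$ kills constants), and the sentence attempting to reduce the first equality via the second is garbled, though the strategy you ultimately adopt is clear and sound.
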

\begin{proof}
Recall that by~\eqref{eq:opasQMF} we have
\be 
\lbrace\op{\B u_{\mathbf{a}}} \rbrace_{\! q} \=\sum_{m=0}^{n/2}c^{n-2m}\sum_{\substack{\pi\in \Pi_2(n,m)\\ a_{B(\pi)} = |\mathbf{a}|}}\prod_{A\in \pi} s(\mathbf{a},A)\, \mathbb{G}_{a_A+2}\hspace{1pt}.
\ee
For the first equation, it suffices to show that $-2\,\mathfrak{d}$ and $\pdv{^2}{c^2}$ agree on $\lbrace\op{\B u_{\mathbf{a}}} \rbrace_{\! q}\,$. We compute
\be
-2\,\mathfrak{d} \lbrace\op{\B u_{\mathbf{a}}} \rbrace_{\! q}  \=2\sum_{m=0}^{\lfloor n/2\rfloor} c^{n-2m}\sum_{\substack{\pi\in \Pi_2(n,m)\\ a_{B(\pi)} = |\mathbf{a}|}}\sum_{\substack{A'\in \pi \\ a_{A'}=0}}\prod_{\substack{A\in \pi \\ A\neq A'}} s(\mathbf{a},A)\, \mathbb{G}_{a_A+2}\hspace{1pt},
\ee
where we used that $\mathfrak{d}$ is a derivation which annihilates $\mathbb{G}_{a_A+2}$ except if $a_A=0$. Now, let $\pi'\in \Pi_2(n,m-1)$ be the partition after removing $A'$ from $\pi$. Note that there are $\binom{n-2m+2}{2}$ partitions $\pi$ yielding $\pi'$. Hence, the first equality follows from the computation
\begin{align}
\nonumber
-2\,\mathfrak{d} \lbrace\op{\B u_{\mathbf{a}}} \rbrace_{\! q}  &\=2\sum_{m=0}^{\lfloor n/2\rfloor}c^{n-2m}\binom{n-2m+2}{2}\sum_{\substack{\pi'\in \Pi_2(n,m-1)\\ a_{B(\pi)} = |\mathbf{a}|}}\prod_{A\in \pi'} s(\mathbf{a},A)\, \mathbb{G}_{a_A+2} \\
&\=\pdv{^2}{c^2}\sum_{m=1}^{\lfloor n/2\rfloor}c^{n-2m+2}\sum_{\substack{\pi'\in \Pi_2(n,m-1)\\ a_{B(\pi)} = |\mathbf{a}|}}\prod_{A\in \pi'} s(\mathbf{a},A)\, \mathbb{G}_{a_A+2} \=\pdv{^2}{c^2}\lbrace\op{\B u_{\mathbf{a}}} \rbrace_{\! q} \hspace{1pt}.
\end{align}
The second equality follows from Lemma~\ref{lemma:simple}.
\end{proof}

\section{Applications to quantum integrable hierarchies}

\subsection{Double Ramification quantum integrable hierarchies}\label{sec:BR}

A Cohomological Field Theory (CohFT) \cite{KontsevichManin} consists of \vspace{-7pt}
\begin{itemize}\itemsep0pt
\item a finite-dimensional $\Q$-vector space $V$, equipped with a non-degenerate symmetric two-form $\eta\in{\rm Sym}^2(V^*)$ and with a distinguished element $\1\in V$, and 
\item linear maps $c_{g,n}:V^{\otimes n}\to H^{\sf even}(\Mgn,\Q)$, indexed by $g,n\geq 0$ such that $2g-2+n\geq 0$.
\end{itemize}
Here, $\Mgn$ is the Deligne--Mumford moduli space of stable curves of genus~$g$ with~$n$ marked points and $H^{\sf even}(\Mgn,\Q)$ is the even part of its rational cohomology ring.
The maps $c_{g,n}$ have to satisfy a number of axioms prescribing their behaviour under natural maps between the moduli spaces, i.e., under permutation of marked points, forgetting of marked points, and glueing of curves.
For more details see \cite[Section~3]{Buryak} and references therein.

In \cite{BuryakRossi2} a family of Hamiltonian densities $g_k(\u;\epsilon)$ is defined starting from an arbitrary CohFT.
We shall consider here only the case of one-dimensional CohFTs, namely $V = \Q\1$, under the additional assumption $\eta(\1\otimes \1)=1$ and, by a slight abuse of notation, we shall denote $c_{g,n}$ the value at $\1^{\otimes n}$.
A result of Teleman \cite{Teleman} implies that all such CohFTs are given by
\be
\label{eq:Teleman}
c_{g,n} \= \exp\Bigl(\sum\nolimits_{j\geq 1}s_{2j-1}\,{\rm ch}_{2j-1}(\mathbb E)\Bigr),
\ee
in terms of parameters $s_k\hspace{1pt},$ for $k\geq 1$ and odd, where ${\rm ch}_k(\mathbb E)\in H^k(\Mgn,\Q)$ are the Chern characters of the Hodge bundle $\mathbb E$ over $\Mgn$.
The densities are defined by\footnote{The density $g_{-2}=1$ is not introduced in \cite{BuryakRossi2}.}
\be
g_{-2}(\u;\epsilon) \defis 1,\qquad g_{-1}(\u;\epsilon) \defis u_0 \hspace{1pt},
\ee
and $g_k(\u;\epsilon)$ for $k\geq 0$ is defined in terms of the Fourier series (already used in the introduction) $v_j(x)= \sum_{\ell\in\Z}(\i\ell)^j\,\omega_\ell\,\e^{\i\ell x}$, for $j\geq 0$, by requiring that
\be
\label{eq:defg}
g_k(\mathbf v(x);\epsilon) \defis\!	 \sum_{\begin{smallmatrix}g,n\geq 0 \\ 2g-2+n\geq 0\end{smallmatrix}}\frac 1{n!}\sum_{(a_1,\dots,a_n)\in\Z^n}\int_{\DR_g(-|a|,a_1,\dots,a_n)}\psi_1^{k}\,\Lambda_g^\vee(\epsilon)\,c_{g,n+1}\,\omega_{a_1}\dots\omega_{a_n}\,\e^{\i|a| x},
\ee
where $|a|:=\sum_{i=1}^na_i$ and $\mathbf v(x) \defis (v_0(x),v_1(x),\dots)$, see \cite{BuryakRossi1,BuryakRossi2} for more details.
Here, we use the following standard notations (see, e.g., \cite{BuryakRossi2} and references therein for more details):\vspace{-7pt}
\begin{itemize}\itemsep0pt
\item
$\DR_g(a_0,a_1,\dots,a_n)\in H_{2g-2+n}(\overline{\mathcal M}_{g,n+1},\Q)$ is the \emph{Double Ramification cycle} (roughly speaking, defined as the locus in $\overline{\mathcal M}_{g,n+1}$ of stable curves with marked points $(C;p_0,\dots,p_n)$ such that $\mathcal O_C\simeq\mathcal O_C(\sum_{i=0}^na_ip_i)$),
\item
$\psi_1\in H^2(\overline{\mathcal M}_{g,n+1},\Q)$ is the Chern class of the cotangent line bundle at the first marked point, and
\item
$\Lambda_g^\vee(\epsilon) = 1-\epsilon\lambda_1+\dots+(-\epsilon)^g\lambda_g\hspace{1pt},$ where $\lambda_i\in H^{2i}(\overline{\mathcal M}_{g,n+1},\Q)$ are the Chern classes of the Hodge bundle.
\end{itemize}
It is worth pointing out that all densities~$g_k(\u;\epsilon)$ are sums of even monomials with respect to the weight operator $\sum_{j\geq 0}u_j\pdv{}{u_j}$, i.e., $g_k(\u;\epsilon)\in \Q[\u]^{\sf even}[\epsilon]$ (see \cite[Appendix~B]{BuryakRossi2}).

\begin{remark}
\label{remark:normalization}
To simplify the exposition we have omitted the quantization parameter $\hbar$ of \cite{BuryakRossi2}, as the normalization of op.\ cit.\ can be recovered from~\eqref{eq:defg} by the transformations
\be
\epsilon\mapsto\epsilon(\i\hbar)^{-\frac 12},\qquad s_k\mapsto s_k(\i\hbar)^{\frac k2},\qquad u_k\mapsto u_k(\i\hbar)^{-\frac 12},\qquad g_k\mapsto (\i\hbar)^{\frac{k+2}2}g_k\hspace{1pt}.
\ee
This follows directly from the dimensional constraints of the integrals over the Double Ramification cycle in~\eqref{eq:defg}.
Moreover, the parameter $\epsilon$ in this paper corresponds to $\varepsilon^2$ in \cite{BuryakRossi2}.
To compare with the normalization of \cite{Dubrovin}, where $\epsilon=0$, we need to replace $\i\hbar$ with $\hbar$.
\end{remark}

It is also follows from \cite[Theorem~3.5 and Lemma~3.7]{BuryakRossi2}, combined with Remark~\ref{remark:normalization} (see also \cite[Section~4]{RuzzaYang}), that the densities $g_k(\u;\epsilon)$ in~\eqref{eq:defg} can be determined from $g_{-1}(\u;\epsilon)=u_0$ by the recursion
\begin{align}
\label{eq:recursionBR1}
\pdv{g_{k+1}(\u;\epsilon)}{u_0}\= &g_k(\u;\epsilon),\\
\label{eq:recursionBR2}(k+2+\mathcal D)\pa_x g_{k+1}(\u;\epsilon)\= &\left[g_k(\u;\epsilon),\overline{g_1(\u;\epsilon)}\right],\qquad k\geq -1,
\end{align}
provided one has computed $g_1(\u;\epsilon)$ in advance, at least up to constant terms in $\u$, cf.~\eqref{eq:commutator}.
Here
\be
\label{eq:D}
\partial_x \defis \sum_{i\geq 0}u_{i+1}\pdv{}{u_i},\qquad\mathcal D \defis \epsilon\pdv{}{\epsilon}\+\sum_{i\geq 1}(2i-1)s_{2i-1}\pdv{}{s_{2i-1}},
\ee
and the expression $[f,\overline g]$ is defined for $f,g\in\Q[\u][\![\epsilon]\!]$ by (cf.~\cite[Equation~2.2]{BuryakRossi2})
\be
\label{eq:commutator}
[f,\overline g] \= \sum_{n\geq 1}\frac{(-1)^{n-1}}{n!}\!\sum_{\substack{r_1,\dots,r_n\geq 0 \\ s_1,\dots,s_n\geq 0}}\frac{\partial^nf}{\partial u_{s_1}\cdots\pa u_{s_n}}(-1)^{|\mathbf{r}|}P_{r_1+s_1+1,\dots,r_n+s_n+1}(\pa_x)\,\frac{\pa^ng}{\pa u_{r_1}\cdots\pa u_{r_n}},
\ee
and does not depend on constant terms of $f$ and $g$ in $\u$.
In~\eqref{eq:commutator}, $|\mathbf{r}|=\sum_{i=1}^n r_i$ and $P_{\ell_1,\dots,\ell_n}(\xi)$ are polynomials in $\xi$ defined by the sequence of their coefficients, namely,
\be
\label{eq:PwtP}
[\xi^j]\, P_{\ell_1,\dots,\ell_n}(\xi) \= \begin{cases}(-1)^{\frac{n-1-j+|\mathbf{\ell}|}2}\,[\xi^j]\wt P_{\ell_1,\dots,\ell_n}(\xi) & n-1-j+|\mathbf{\ell}|\mbox{ is even},\\
0 & \mbox{otherwise},\end{cases}
\ee
and $\wt P_{\ell_1,\dots,\ell_n}(\xi)$ are polynomials in $\xi$ determined by their values at positive integers $\xi$,
\be
\wt P_{\ell_1,\dots,\ell_n}(\xi) \= \sum_{\substack{a_1,\dots,a_n\geq 0 \\ a_1+\dots+a_n = \xi}}a_1^{\ell_1}\cdots a_n^{\ell_n}.
\ee

Later we shall need the following particular cases.

\begin{lemma}
\label{lemma:commutator}
We have $P_\ell(\xi) \= \xi^\ell$ and
\begin{align}
\nonumber
P_{\ell,m}(\xi) &\= \frac{\ell!\,m!}{(\ell+m+1)!}\xi^{\ell+m+1} \+
\\
\label{eq:Plm}
&\qquad\+\sum_{i\geq 0}\frac{B_{2i+2}}{2i+2}\,\biggl(\!(-1)^{\ell+i}\,{\binom{\ell}{2i+1-m}}\+(-1)^{m+i}\,\binom{m}{2i+1-\ell}\!\biggr)\,\xi^{\ell+m-2i-1}.
\end{align}
\end{lemma}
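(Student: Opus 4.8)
The plan is to compute the polynomials $P_\ell(\xi)$ and $P_{\ell,m}(\xi)$ directly from their defining recipe in~\eqref{eq:PwtP}, by first evaluating the auxiliary polynomials $\wt P_\ell(\xi)$ and $\wt P_{\ell,m}(\xi)$ and then applying the sign-twist in~\eqref{eq:PwtP}. For $n=1$ this is immediate: $\wt P_\ell(\xi) = \sum_{a_1 = \xi} a_1^\ell = \xi^\ell$ as a polynomial identity, and the condition $n-1-j+|\ell| = -j+\ell$ even together with $[\xi^j]\wt P_\ell = \delta_{j,\ell}$ forces $P_\ell(\xi) = (-1)^0 \xi^\ell = \xi^\ell$. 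So the content of the lemma is entirely in the $n=2$ case.

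For $n=2$, first I would establish the closed form for $\wt P_{\ell,m}(\xi) = \sum_{a+b=\xi} a^\ell b^m$ as a polynomial in $\xi$. The standard tool here is the Faulhaber/Bernoulli calculus: writing $\sum_{a=0}^{\xi} a^\ell$ via Bernoulli polynomials and iterating, one gets that $\sum_{a+b=\xi} a^\ell b^m$ is a polynomial of degree $\ell+m+1$ whose leading term is $\frac{\ell!\,m!}{(\ell+m+1)!}\,\xi^{\ell+m+1}$ (this is just the continuous approximation $\int_0^\xi t^\ell(\xi-t)^m\,\d t = B(\ell+1,m+1)\,\xi^{\ell+m+1}$), and whose lower-order terms are governed by Bernoulli numbers. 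Concretely, one can use the generating-function identity: if $\sum_{a\ge 0} a^\ell x^a$ is packaged into $\sum_\ell \frac{t^\ell}{\ell!}\sum_{a\ge 0}a^\ell x^a = \frac{1}{1 - x\e^t}$, then $\wt P_{\ell,m}$ is read off from $\sum_{\ell,m}\frac{t^\ell}{\ell!}\frac{s^m}{m!}\wt P_{\ell,m}(\xi) = [x^\xi]\frac{1}{(1-x\e^t)(1-x\e^s)}$, and a partial-fraction decomposition in $x$ plus the generating function $\frac{t}{\e^t-1} = \sum_i \frac{B_i}{i!}t^i$ produces exactly the Bernoulli-weighted sum. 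Carrying this through yields
\[
\wt P_{\ell,m}(\xi) \= \frac{\ell!\,m!}{(\ell+m+1)!}\,\xi^{\ell+m+1} \+ \sum_{i\ge 0}\frac{B_{2i+2}}{2i+2}\Bigl(\binom{\ell}{2i+1-m}+\binom{m}{2i+1-\ell}\Bigr)\xi^{\ell+m-2i-1},
\]
where the odd Bernoulli numbers drop out because $B_{\text{odd}}=0$ for odd index $\ge 3$ and the $i=0$, degree-$(\ell+m-1)$ term absorbs the $B_1$-type contributions correctly (this cancellation deserves a careful check).

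Finally I would apply~\eqref{eq:PwtP}: the coefficient of $\xi^j$ in $P_{\ell,m}$ is $(-1)^{(2-1-j+\ell+m)/2}$ times the coefficient of $\xi^j$ in $\wt P_{\ell,m}$, when $1-j+\ell+m$ is even, and zero otherwise. For the leading term $j=\ell+m+1$ the sign is $(-1)^0 = 1$. For the term $j = \ell+m-2i-1$ the exponent is $(2-1-(\ell+m-2i-1)+\ell+m)/2 = (2+2i)/2 = i+1$, giving a sign $(-1)^{i+1}$; combined with the two binomial contributions and pulling the sign into each summand appropriately, one recovers the stated form $\sum_i \frac{B_{2i+2}}{2i+2}\bigl((-1)^{\ell+i}\binom{\ell}{2i+1-m} + (-1)^{m+i}\binom{m}{2i+1-\ell}\bigr)\xi^{\ell+m-2i-1}$ — here one uses that $(-1)^{i+1} = (-1)^{\ell+i}(-1)^{\ell+1}$ and that the parity constraint, together with the binomial vanishing unless $2i+1-m \equiv \ell \pmod 2$ i.e. $\ell+m$ odd (forcing $1-j+\ell+m$ even automatically), makes the bookkeeping consistent. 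The main obstacle I anticipate is precisely this last parity/sign reconciliation: matching the global sign convention in~\eqref{eq:PwtP} against the two separate binomial terms, each of which is nonzero only on a specific residue class, and verifying that the degenerate low-degree terms (small $i$, or cases where $\ell$ or $m$ is small so that a binomial like $\binom{\ell}{2i+1-m}$ is at the boundary of its support) come out right. A clean way to sidestep some of this is to verify the final formula by a generating-function computation that incorporates the $(-1)$-twist from the start — i.e. work with $\sum_{a+b=\xi}(-1)^{?}a^\ell b^m$-type sums or directly with the defining commutator formula~\eqref{eq:commutator} specialized to quadratic densities — but the Bernoulli-calculus route above is the most transparent.
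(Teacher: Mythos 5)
Your strategy is the same as the paper's: compute the exponential generating function $\sum_{\ell,m}\wt P_{\ell,m}(\xi)\tfrac{u^\ell v^m}{\ell!\,m!}$, which both your partial-fraction computation of $[x^\xi](1-x\e^u)^{-1}(1-x\e^v)^{-1}$ and the paper's direct summation identify as $\frac{\e^{u(\xi+1)}-\e^{v(\xi+1)}}{\e^u-\e^v}$, expand via $\frac{z}{\e^z-1}=\sum_j B_j\frac{z^j}{j!}$, and then apply the twist~\eqref{eq:PwtP}. The $n=1$ case is fine. The problem is in the execution of the $n=2$ case: the closed form you assert for $\wt P_{\ell,m}$ is wrong, and the subsequent sign discussion does not repair it. Carrying the Bernoulli expansion through (keeping the signs $(-1)^{j-1-s}$ from the binomial expansion of $(u-v)^{j-1}$ and of $(v-u)^{j-1}$ in the two symmetric pieces) one finds, modulo terms of degree $\not\equiv\ell+m+1\pmod 2$ which are killed by the parity condition in~\eqref{eq:PwtP},
\begin{equation*}
\wt P_{\ell,m}(\xi)\,\equiv\,\frac{\ell!\,m!}{(\ell+m+1)!}\,\xi^{\ell+m+1}\,+\,\sum_{i\geq 0}\frac{B_{2i+2}}{2i+2}\Bigl((-1)^{m}\binom{\ell}{2i+1-m}+(-1)^{\ell}\binom{m}{2i+1-\ell}\Bigr)\xi^{\ell+m-2i-1},
\end{equation*}
i.e.\ the two binomials carry the signs $(-1)^m$ and $(-1)^\ell$, which your formula omits. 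A direct check kills your version: $\wt P_{2,1}(\xi)=\sum_{a=0}^{\xi}a^2(\xi-a)=\tfrac{1}{12}(\xi^4-\xi^2)$ has $\xi^2$-coefficient $-\tfrac1{12}$, while your unsigned formula predicts $+\tfrac1{12}$; likewise $\wt P_{1,1}(\xi)=\tfrac16(\xi^3-\xi)$ versus your predicted $+\tfrac{\xi}{6}$.

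This is not a cosmetic slip, because the target formula~\eqref{eq:Plm} carries two \emph{different} signs $(-1)^{\ell+i}$ and $(-1)^{m+i}$ on the two binomials. The twist~\eqref{eq:PwtP} contributes only a single overall factor per degree (namely $(-1)^{i+1}$ at degree $\ell+m-2i-1$, since $n-1-j+|\mathbf\ell|=2i+2$ there), and a single overall sign applied to a sum of two unsigned binomials can never produce two distinct signs when $\ell\not\equiv m\pmod 2$. Your attempted rescue --- that $\binom{\ell}{2i+1-m}$ ``vanishes unless $2i+1-m\equiv\ell\pmod 2$'' --- is simply false (take $\ell=2$, $m=2i$: then $\binom{\ell}{2i+1-m}=\binom{2}{1}=2\neq 0$ although $2i+1-m$ is odd and $\ell$ is even), so the parity bookkeeping you invoke does not close the gap. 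The missing ingredient is precisely the factors $(-1)^m$ and $(-1)^{\ell}$ in the displayed formula above, which arise from the alternating binomial expansions of $(u-v)^{j-1}$ and $(v-u)^{j-1}$; once they are included, the uniform factor from~\eqref{eq:PwtP} distributes over the two terms and recombines into the two distinct signs of~\eqref{eq:Plm}. I would also flag your passing claim that ``the $i=0$, degree-$(\ell+m-1)$ term absorbs the $B_1$-type contributions'': in fact the $B_1$ term of the expansion contributes only in degree $\ell+m$, which is of the wrong parity and is discarded by~\eqref{eq:PwtP}; it is not absorbed into the $B_2$ term.
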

\begin{proof}(See also \cite[Lemma 6.1.2]{vanIttersumSymmetric}.)
Only~\eqref{eq:Plm} needs a proof.
We compute the generating series (for $\xi$ integer)
\be
\mathbf P(u,v;\xi)\defis\sum_{\ell,m\geq 0}\wt P_{\ell,m}(\xi)\frac{u^\ell}{\ell!}\frac{v^m}{m!}\=\sum_{a=0}^\xi\e^{au}\e^{(\xi-a)v} \= 
\e^{\xi v}\sum_{a=0}^\xi\e^{a(u-v)} \= \frac{\e^{u(\xi+1)}-\e^{v(\xi+1)}}{\e^u-\e^v}.
\ee
The proof is complete by Taylor expanding $\mathbf P(u,v;\xi) \= \frac{\e^{u\xi}-1}{\e^{u-v}-1}\+\frac{\e^{v\xi}-1}{\e^{v-u}-1}$ using $\frac z{\e^z-1} \= \sum_{j\geq 0} B_j\frac{z^{j}}{j!}$ and by~\eqref{eq:PwtP}.
\end{proof}

\begin{remark}
\label{remark:recursionwellposed}
It is explained in \cite[Section~3.5]{BuryakRossi2} that the recursion equations~\eqref{eq:recursionBR1} and ~\eqref{eq:recursionBR2} uniquely determine the $g_k$'s, for all $k\geq -1$, using as initial data $g_{-1}=u_0$ and $g_1$ (the latter up to additive constants which are fixed by the recursion).
Indeed, one first uses~\eqref{eq:recursionBR2} to determine the~$g_k$'s for $k\geq -1$ up to a constant depending on $k,\epsilon,\mathbf s$ only.
Note that this yet undetermined constant does not affect the right-hand side of~\eqref{eq:recursionBR2} so that the recursion works.
These constants are finally determined by~\eqref{eq:recursionBR1}.
\end{remark}

\subsection{Quantum Intermediate Long Wave hierarchy}\label{sec:proof2}

The quantum Intermediate Long Wave hierarchy corresponds to the construction of Buryak and Rossi for the \emph{Hodge} CohFT (see also \cite{BSTV2014})
\be
c_{g,n} \= 1+\mu\lambda_1+\dots+\mu^g\lambda_g\hspace{1pt},
\ee
where $\mu$ is a parameter and $\lambda_k\in H^{2k}(\Mgn,\Q)$ are the Chern classes of the Hodge bundle.
In terms of the parameters $\mathbf s$ in~\eqref{eq:Teleman} we have $s_{2i-1}=(2i-2)!\,\mu^{2i-1}$, whence~\eqref{eq:D} reduces to
\be
\label{eq:DILW}
\mathcal D \= \epsilon\pdv{}{\epsilon}\+\mu\pdv{}{\mu}.
\ee
Let us denote~$g_k^{\sf ILW}(\u;\epsilon,\mu)$ and~$G_k^{\sf ILW}(\epsilon,\mu):=\op{g_k^{\sf ILW}(\u;\epsilon,\mu)}$ the densities and operators for this hierarchy.
We know from \cite[Lemma~4.2]{BuryakRossi2} that\footnote{In loc.\ cit.\ the formula is given up to constant terms.}
\be
G_1^{\sf ILW}(\epsilon,\mu) \= \oplarge{\frac{u_0^3}{6}-\frac{u_0}{24}+(\epsilon-\mu)\sum_{g\geq 1}(\epsilon\mu)^{g-1}\frac{|B_{2g}|}{2(2g)!}\left(u_0u_{2g}-\frac{|B_{2g+2}|}{2g+2}\right)}\hspace{1pt}.
\ee

We start by making the recursion~\eqref{eq:recursionBR2} more explicit in this case.

\begin{lemma}
The operator
\be
\mathcal R^{\sf ILW}: g(\u;\epsilon,\mu)\mapsto\left[g(\u;\epsilon,\mu),\overline{g_1^{\sf ILW}(\u;\epsilon,\mu)}\right]
\ee
can be spelled out as
\be
\mathcal R^{\sf ILW} \= \mathcal R_1^{\sf ILW}+\mathcal R_2^{\sf ILW},
\ee
where
\begin{align}
\label{eq:RILW1}
\mathcal R_1^{\sf ILW} \= &\sum_{i\geq 0}\partial_x^{i+1}\Big(\frac{u_0^2}2+(\epsilon-\mu)\sum_{g\geq 1}(\epsilon\mu)^{g-1}\frac{|B_{2g}|}{(2g)!}u_{2g}\Big)\frac{\pa}{\pa u_i}-\frac 12\sum_{i,j\geq 0}\frac{(i+1)!(j+1)!}{(i+j+3)!}u_{i+j+3}\frac{\pa^2}{\pa u_i\pa u_j},
\\
\label{eq:RILW2}
\mathcal R_2^{\sf ILW} \= &\sum_{i,j,l\geq 0}\frac{B_{2l+2}}{2l+2}\,\biggl(\!(-1)^{i+l}\,\binom{i+1}{2l-j}+(-1)^{j+l}\binom{j+1}{2l-i}\!\biggr)\,u_{i+j+1-2l}\,\frac{\pa^2}{\pa u_i\pa u_j}.
\end{align}
\end{lemma}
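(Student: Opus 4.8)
The plan is to specialize the general commutator formula~\eqref{eq:commutator} to the case $g = g_1^{\sf ILW}(\u;\epsilon,\mu)$ and collect terms according to how many derivatives $\partial^n g / \partial u_{r_1}\cdots\partial u_{r_n}$ of $g_1^{\sf ILW}$ survive. Since
\be
g_1^{\sf ILW} \= \frac{u_0^3}{6} - \frac{u_0}{24} + (\epsilon-\mu)\sum_{g\geq 1}(\epsilon\mu)^{g-1}\frac{|B_{2g}|}{2(2g)!}\left(u_0 u_{2g} - \frac{|B_{2g+2}|}{2g+2}\right)
\ee
is (at most) cubic in the $u_i$ and in fact every monomial has $u$-degree $1$ or $3$, the only nonzero partial derivatives appearing in~\eqref{eq:commutator} are the first derivatives ($n=1$) and the second derivatives ($n=2$); all higher derivatives vanish, which already truncates the sum over $n$ in~\eqref{eq:commutator} to $n\in\{1,2\}$. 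The $n=1$ term contributes $\sum_{r,s\geq 0}\frac{\partial f}{\partial u_s}(-1)^r P_{r+s+1}(\partial_x)\frac{\partial g_1^{\sf ILW}}{\partial u_r}$, and by Lemma~\ref{lemma:commutator} we have $P_\ell(\xi)=\xi^\ell$, so $P_{r+s+1}(\partial_x) = \partial_x^{r+s+1}$; summing over $r$ using the explicit form of $\partial g_1^{\sf ILW}/\partial u_r$ (which is $u_0^2/2 - 1/24$ for $r=0$ and $(\epsilon-\mu)\sum_g (\epsilon\mu)^{g-1}\frac{|B_{2g}|}{2(2g)!}u_0$ for $r=2g$, plus the pure-$u_{2g}$ piece coming from the $u_0 u_{2g}$ monomial differentiated in the other slot) produces exactly the first sum in~\eqref{eq:RILW1}.

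For the second-derivative contribution we need $\partial^2 g_1^{\sf ILW}/\partial u_{r_i}\partial u_{s_i}$, which is nonzero only when the two indices pick out the cubic monomial $u_0^3/6$ (giving $\partial^2/\partial u_0^2 \mapsto u_0$) or the bilinear monomials $u_0 u_{2g}$. The $n=2$ term in~\eqref{eq:commutator} is $-\frac12\sum \frac{\partial^2 f}{\partial u_{s_1}\partial u_{s_2}}(-1)^{r_1+r_2}P_{r_1+s_1+1,\,r_2+s_2+1}(\partial_x)\frac{\partial^2 g_1^{\sf ILW}}{\partial u_{r_1}\partial u_{r_2}}$, and here we invoke the explicit formula~\eqref{eq:Plm} for $P_{\ell,m}$: the leading term $\frac{\ell!\,m!}{(\ell+m+1)!}\xi^{\ell+m+1}$ of $P_{\ell,m}$ together with the $u_0^3/6$ piece gives the second sum in~\eqref{eq:RILW1} (after reindexing $\ell = i+1$, $m=j+1$ so $\ell+m+1 = i+j+3$ and $\frac{\ell!m!}{(\ell+m+1)!}=\frac{(i+1)!(j+1)!}{(i+j+3)!}$, and multiplying by $-\frac12$ and by the value $u_0$ of $\partial^2(u_0^3/6)/\partial u_0^2$ — but here one must be careful that in~\eqref{eq:RILW1} the operator acts after the derivatives, so $u_0$ is replaced by $\partial_x$-applied to nothing; in fact what happens is that with $s_1=s_2=0$ the operator $P_{r_1+1,r_2+1}(\partial_x)$ hits $\partial^2 g_1^{\sf ILW}/\partial u_{r_1}\partial u_{r_2}$, and one must track this carefully). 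The Bernoulli-number correction terms in~\eqref{eq:Plm}, namely $\sum_i \frac{B_{2i+2}}{2i+2}\bigl((-1)^{\ell+i}\binom{\ell}{2i+1-m} + (-1)^{m+i}\binom{m}{2i+1-\ell}\bigr)\xi^{\ell+m-2i-1}$, combined again with the $u_0^3/6$ monomial and the $-\frac12$ prefactor, produce the sum $\mathcal R_2^{\sf ILW}$ in~\eqref{eq:RILW2}, after the reindexing $\ell=i+1$, $m=j+1$, $l = i_{\text{old}}$ and using $(-1)^{r_1+r_2}$ together with~\eqref{eq:PwtP} to absorb signs (and checking that the $-\frac12$ combines with the symmetry between the two binomials and the two choices of which slot is $r_1$ versus $r_2$ to cancel the $\frac12$).

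The main obstacle, and the step requiring genuine care rather than mechanical substitution, is the \emph{bookkeeping of signs and of the $\partial_x$-ordering}: in~\eqref{eq:commutator} the polynomial $P_{\bullet}(\partial_x)$ sits \emph{between} the two sets of partial derivatives, and one must verify that applying $\partial_x^{i+1}(\cdots)\frac{\partial}{\partial u_i}$ in the form written in~\eqref{eq:RILW1} is genuinely equivalent to the corresponding term of~\eqref{eq:commutator} — in particular that the $(-1)^{|\mathbf r|}$ factors, the factor $(-1)^{\frac{n-1-j+|\boldsymbol\ell|}{2}}$ hidden in~\eqref{eq:PwtP}, and the alternating signs in~\eqref{eq:Plm} all conspire so that only the terms displayed survive with the stated coefficients (and all odd-weight would-be terms cancel, consistent with $g_k^{\sf ILW}\in\Q[\u]^{\sf even}$). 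I would organize this by first treating $\mathcal R_1^{\sf ILW}$ (the $n=1$ contribution plus the leading term of the $n=2$ contribution) where signs are trivial since $P_\ell(\xi)=\xi^\ell$ and the leading term of $P_{\ell,m}$ has a clean positive coefficient, and then isolating $\mathcal R_2^{\sf ILW}$ as precisely the "Bernoulli tail" of~\eqref{eq:Plm}, at which point the identity $\mathcal R^{\sf ILW}=\mathcal R_1^{\sf ILW}+\mathcal R_2^{\sf ILW}$ is exactly the statement that~\eqref{eq:Plm} accounts for \emph{all} of $P_{\ell,m}$. Finally I would double-check the $\epsilon,\mu$-dependence: since only $u_0$ and the $u_{2g}$ (for $g\geq 1$) appear in $g_1^{\sf ILW}$, with the $u_{2g}$'s carrying the prefactor $(\epsilon-\mu)(\epsilon\mu)^{g-1}\frac{|B_{2g}|}{2(2g)!}$, the derivatives $\partial g_1^{\sf ILW}/\partial u_{2g}$ and $\partial^2 g_1^{\sf ILW}/\partial u_0\partial u_{2g}$ reproduce exactly the $(\epsilon-\mu)\sum_{g\geq1}(\epsilon\mu)^{g-1}\frac{|B_{2g}|}{(2g)!}u_{2g}$ term in~\eqref{eq:RILW1} (note the factor $2$ from the two orderings of $u_0 u_{2g}$ cancels the $\frac12$ in $g_1^{\sf ILW}$), and contribute nothing to $\mathcal R_2^{\sf ILW}$ since that term comes solely from $\partial^2/\partial u_0^2$ of the cubic $u_0^3/6$.
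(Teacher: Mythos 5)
Your proposal follows exactly the paper's (one-line) proof: specialize the commutator formula~\eqref{eq:commutator} to $g_1^{\sf ILW}$, observe that only $n=1,2$ contribute since $g_1^{\sf ILW}$ is at most cubic, and read off $\mathcal R_1^{\sf ILW}$ from $P_\ell(\xi)=\xi^\ell$ together with the leading term of $P_{\ell,m}$ in Lemma~\ref{lemma:commutator}, and $\mathcal R_2^{\sf ILW}$ from the Bernoulli tail of $P_{\ell,m}$. The only slips are minor: in~\eqref{eq:commutator} the indices attached to $g_1^{\sf ILW}$ are the $r_i$, so the cubic monomial forces $r_1=r_2=0$ and it is $P_{s_1+1,s_2+1}(\partial_x)$ that acts on $u_0$ (you transpose the roles of $r$ and $s$ in a parenthetical), and the vanishing of the contribution of the constant mixed derivatives $\partial^2 g_1^{\sf ILW}/\partial u_0\partial u_{2g}$ rests on the easily checked fact that $[\xi^0]P_{\ell,m}=0$, which you assert but do not verify.
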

\begin{proof}
It follows directly from~\eqref{eq:commutator} and Lemma~\ref{lemma:commutator}.
\end{proof}

Introduce the \emph{reduced} ILW densities
\be
\label{eq:reducedILW}
\wt g_k^{\sf ILW} \defis \B^{-1} g_k^{\sf ILW}\hspace{1pt},
\ee
where $\B$ is the operator given in~\eqref{eq:B}.
Remarkably, they satisfy a similar but slightly simpler recursion.

\begin{lemma}\label{lemma:simplifiedrecursion}
The reduced ILW densities $\wt g_{k}^{\sf ILW}(\u;\epsilon,\mu)$ are uniquely determined from $\wt g_{-2}^{\sf ILW}(\u;\epsilon,\mu)=1$ and $\wt g_{-1}^{\sf ILW}(\u;\epsilon,\mu)=u_0$ by the recursion
\begin{align}
\label{eq:recursionreduced1}
\pdv{\wt g_{k+1}^{\sf ILW}(\u;\epsilon,\mu)}{u_0}&\= \wt g_k^{\sf ILW}(\u;\epsilon,\mu),\\
\label{eq:recursionreduced2}
(k+2+\mathcal D)\pa_x\wt g_{k+1}^{\sf ILW}(\u;\epsilon,\mu)&\=\mathcal R_1^{\sf ILW}\,\wt g_k^{\sf ILW}(\u;\epsilon,\mu),\qquad k\geq -1,
\end{align}
where $\mathcal R_1^{\sf ILW}$ is given in~\eqref{eq:RILW1}.
\end{lemma}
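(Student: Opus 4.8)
The plan is to conjugate the Buryak--Rossi recursion~\eqref{eq:recursionBR1}--\eqref{eq:recursionBR2} (in the ILW specialization) by the operator $\B$ and check that the conjugation replaces the full operator $\mathcal R^{\sf ILW} = \mathcal R_1^{\sf ILW}+\mathcal R_2^{\sf ILW}$ by $\mathcal R_1^{\sf ILW}$ alone. First I would observe that~\eqref{eq:recursionreduced1} is immediate: since $\B$ is an exponential of a second-order operator in the $u_i$ with constant coefficients, it commutes with $\partial/\partial u_0$, so applying $\B^{-1}$ to~\eqref{eq:recursionBR1} gives $\partial \wt g_{k+1}^{\sf ILW}/\partial u_0 = \wt g_k^{\sf ILW}$ at once. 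The same remark shows $\B$ commutes with $\partial_x = \sum_{i\geq 0} u_{i+1}\,\partial/\partial u_i$ up to the usual care (the commutator $[\partial_x,\partial^2/\partial u_i\partial u_j]$ is itself a second-order operator, and one checks $[\B,\partial_x]=0$ directly from the fact that $\partial_x$ is a derivation shifting indices, while the coefficients $(-1)^{(i-j)/2}B_{i+j+2}/(i+j+2)$ are telescoping-compatible with this shift). Likewise $\B$ commutes with the Euler-type operator $\mathcal D = \epsilon\,\partial_\epsilon + \mu\,\partial_\mu$ since $\B$ contains no $\epsilon,\mu$. Hence conjugating~\eqref{eq:recursionBR2} by $\B^{-1}$ reduces the lemma to the identity
\be
\label{eq:key-conjugation}
\B^{-1}\circ \mathcal R^{\sf ILW}\circ \B \= \mathcal R_1^{\sf ILW}
\ee
as operators on $\Q[\u][[\epsilon,\mu]]$, together with the uniqueness statement (which follows verbatim from Remark~\ref{remark:recursionwellposed} once~\eqref{eq:recursionreduced1}--\eqref{eq:recursionreduced2} are established, since $\B$ is invertible and $\wt g_{-1}^{\sf ILW}=\B^{-1}u_0=u_0$ as $\B$ fixes linear forms).

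The heart of the argument is therefore~\eqref{eq:key-conjugation}. I would prove it by a direct commutator computation. Write $\B = \exp(N)$ with $N = -\tfrac12\sum_{i,j\geq 0}(-1)^{(i-j)/2}\tfrac{B_{i+j+2}}{i+j+2}\,\partial^2/\partial u_i\partial u_j$ (understood to vanish unless $i\equiv j\bmod 2$), and expand $\B^{-1}\mathcal R^{\sf ILW}\B = \mathcal R^{\sf ILW} + [\mathcal R^{\sf ILW},N] + \tfrac12[[\mathcal R^{\sf ILW},N],N] + \cdots$. The term $\mathcal R_2^{\sf ILW}$ is purely second-order with coefficients \emph{linear} in the $u_i$, while $\mathcal R_1^{\sf ILW}$ splits as a first-order piece with quadratic-in-$u$ coefficients (the $\partial_x^{i+1}(u_0^2/2 + \cdots)\,\partial/\partial u_i$ part) plus a second-order piece with linear-in-$u$ coefficients. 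The operator $N$ is second-order with \emph{constant} coefficients. Grading $\Q[\u]$ by the weight operator $\sum_j (j+1)u_j\,\partial/\partial u_j$, both $N$ and $\mathcal R_2^{\sf ILW}$ (and the second-order part of $\mathcal R_1^{\sf ILW}$) are weight-preserving, so the bracket structure is constrained: $[\,\text{(first-order, quadratic coeff.)},\,N\,]$ produces exactly a second-order operator with linear coefficients, and one should find that this commutator equals $\mathcal R_1^{\sf ILW,(2)} - \mathcal R_2^{\sf ILW}$ where $\mathcal R_1^{\sf ILW,(2)}$ denotes the second-order part of $\mathcal R_1^{\sf ILW}$ already present — i.e., the single commutator $[\mathcal R^{\sf ILW}, N]$ cancels $\mathcal R_2^{\sf ILW}$, and all higher brackets vanish because $[[\cdot,N],N]$ would be second-order with $u$-degree $-1$ in the coefficients, which is zero. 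The combinatorial content is then the Bernoulli-number identity matching the coefficients: on one side the $B_{2l+2}$-coefficients $(-1)^{i+l}\binom{i+1}{2l-j}+(-1)^{j+l}\binom{j+1}{2l-i}$ from $\mathcal R_2^{\sf ILW}$, on the other side what comes out of commuting $\partial_x^{i+1}(u_0^2/2)\,\partial/\partial u_i$ (which feeds two derivatives into $N$) with the Bernoulli coefficients of $N$; this is precisely the kind of identity packaged in Lemma~\ref{lemma:commutator}, specifically the structure of $P_{\ell,m}(\xi)$ versus $\tfrac{\ell!m!}{(\ell+m+1)!}\xi^{\ell+m+1}$.

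The main obstacle I anticipate is bookkeeping rather than conceptual: organizing the single commutator $[\mathcal R^{\sf ILW},N]$ so that the ``anomalous'' Bernoulli terms visibly cancel $\mathcal R_2^{\sf ILW}$ while the ``main'' term $\tfrac{\ell!m!}{(\ell+m+1)!}$ of $P_{\ell,m}$ is exactly what reconstructs the second-order part of $\mathcal R_1^{\sf ILW}$ already written in~\eqref{eq:RILW1}. A cleaner route, which I would use if the brute-force bracket becomes unwieldy, is to avoid expanding $\B$ at all: instead recall from the proof of Theorem~\ref{thm:main} (equation~\eqref{eq:opasQMF}) that $\B$ is exactly the operator converting the ``$\tfrac{B_k}{2k}+\mathbb G_{a_A+2}$'' contractions appearing in $\opq{\overline{u_{\mathbf a}}}$ into the pure Eisenstein contractions $\mathbb G_{a_A+2}$; equivalently, $\overline{\B g}$ differs from $\overline g$ precisely by removing the Bernoulli self-contraction terms $-B_{i+j+2}/(i+j+2)$ that arise in normal ordering products of $P_k$'s. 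Tracking how these self-contractions propagate through the commutator $[\,\cdot\,,G_1^{\sf ILW}]$ — i.e., comparing $\overline{\B(\,[g,G_1^{\sf ILW}]\,)}$ with $\overline{\B g}$ and reading off which normal-ordering corrections survive — identifies the surviving operator on reduced densities as $\mathcal R_1^{\sf ILW}$ (the ``Bernoulli-free'' part), since $\mathcal R_2^{\sf ILW}$ is by inspection entirely built from the $B_{2l+2}/(2l+2)$ coefficients that $\B$ is designed to kill. Either way the lemma reduces to the explicit operator identity~\eqref{eq:key-conjugation}, whose verification is a finite computation in each weight.
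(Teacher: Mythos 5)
Your overall strategy coincides with the paper's: show $[\B,\partial_{u_0}]=0$, $[\B,\partial_x]=0$, $[\B,\mathcal D]=0$, and thereby reduce the lemma to the operator identity $\B^{-1}\mathcal R^{\sf ILW}\B=\mathcal R_1^{\sf ILW}$. The gap is in your treatment of that identity. Writing $\B=\exp(N)$, you claim the expansion $\mathcal R^{\sf ILW}+[\mathcal R^{\sf ILW},N]+\tfrac12[[\mathcal R^{\sf ILW},N],N]+\cdots$ terminates after the single bracket because the double bracket ``would be second-order with $u$-degree $-1$ in the coefficients, which is zero.'' That degree count is wrong. Take the first-order, quadratic-coefficient part of $\mathcal R_1^{\sf ILW}$, schematically $f\,\partial_{u_\ell}$ with $f$ quadratic in the $u_i$. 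Then $[f\,\partial_{u_\ell},\partial_{u_i}\partial_{u_j}]=-(\partial_{u_i}f)\,\partial_{u_j}\partial_{u_\ell}-(\partial_{u_j}f)\,\partial_{u_i}\partial_{u_\ell}-(\partial_{u_i}\partial_{u_j}f)\,\partial_{u_\ell}$, and bracketing once more with $N$ hits the linear coefficients $\partial_{u_i}f$ and produces \emph{third-order operators with constant coefficients}, which are not zero. Likewise $[\mathcal R_2^{\sf ILW},N]$ and the bracket of $N$ with the second-order part $-\tfrac12\sum\frac{(i+1)!(j+1)!}{(i+j+3)!}u_{i+j+3}\,\partial^2/\partial u_i\partial u_j$ of $\mathcal R_1^{\sf ILW}$ each contribute nonzero cubic constant-coefficient operators. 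The entire difficulty of the lemma is that these cubic terms cancel among themselves; in the paper this cancellation is a genuinely nontrivial Bernoulli-number identity, derived from Skoruppa's theorem on products of Eisenstein series applied to the symmetrized polynomial $\sum_{\pi\in S_3}x^{a_{\pi(1)}}(-y)^{a_{\pi(2)}}(y-x)^{a_{\pi(3)}}$. Your proposal never confronts this step, and the claim that everything beyond the first bracket vanishes for degree reasons is false.

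Two smaller points. First, the single bracket of the quadratic-coefficient part with $N$ also produces a first-order constant-coefficient term proportional to $\sum_{k=0}^{\ell+1}\binom{\ell+1}{k}\nu_{k,\ell+1-k}\,\partial_{u_\ell}$, which vanishes only because $\sum_{k}\binom{\ell+1}{k}(-1)^{k}=0$ — a cancellation your accounting omits. Second, your alternative ``cleaner route'' rests on the assertion that $\mathcal R_1^{\sf ILW}$ is the ``Bernoulli-free'' part killed by $\B$; but $\mathcal R_1^{\sf ILW}$ itself carries Bernoulli numbers $|B_{2g}|/(2g)!$ in its first-order ILW piece, so ``by inspection'' cannot distinguish the two summands, and this heuristic does not substitute for the explicit verification of $[\B,\mathcal R_1^{\sf ILW}]=\mathcal R_2^{\sf ILW}\B$.
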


\begin{proof}
For the sake of clarity, let us drop the superscript ${\sf ILW}$ and the dependence on $\u,\epsilon,\mu$ in this proof.
The equation~\eqref{eq:recursionreduced1} follows from the chain of equalities
\be
\frac{\pa}{\pa u_0}\wt g_{k+1}
=\B^{-1}\frac{\pa}{\pa u_0}g_{k+1}
=\B^{-1} g_k
=\wt g_k,
\ee
where we use $[\B,\partial_{u_0}]=0$.
Next, for any power series $\Phi(\xi_0,\xi_1,\dots)$ we have $[\Phi(\pa_{u_0},\pa_{u_1},\dots),u_j]=(\pa_{\xi_j}\Phi)(\pa_{u_0},\pa_{u_1},\dots)$.
In particular when $\Phi(\xi_0,\xi_1,\cdots) = \exp(\pm\frac 12\sum_{i,j\geq 0}\nu_{i,j}\xi_i\xi_j)$, with 
\be
\nu_{i,j} \defis (-1)^{\frac{i-j}2}\frac{B_{i+j+2}}{i+j+2},
\ee we get
\be
\label{comm}
[\B^{\pm 1},u_j] \= \mp\sum_{i\geq 0}\nu_{i,j}\frac{\pa}{\pa u_i}\B^{\pm 1}.
\ee

We claim that $[\B,(k+2+\mathcal D)\pa_x]=0$. 
Indeed, by~\eqref{comm} we have
\be
[\B,\pa_x] \= -\sum_{i,j\geq 0}\nu_{i,j+1}\frac{\pa^2}{\pa u_i\pa u_j}\B =0,
\ee
because $\nu_{i,j+1}=-\nu_{j,i+1}\hspace{1pt},$ and $[\B,\mathcal D] =0$.
Therefore,
\be
(k+2+\mathcal D)\pa_x\wt g_{k+1}
=\B^{-1}(k+2+\mathcal D)\pa_xg_{k+1}
=\B^{-1}\mathcal Rg_k
=\B^{-1}\mathcal R\mathcal \B \wt g_k\,
\ee
The proof is complete once we show the identity $\B^{-1}\mathcal R\B=\mathcal R_1\hspace{1pt},$ or, equivalently,
\be
\label{tobeproved}
[\B,\mathcal R_1]\=\mathcal R_2\mathcal \B.
\ee
The operator $\mathcal R_1$ consists of three parts, namely, $\mathcal R_1=\mathcal R_1^{(a)}+\mathcal R_1^{(b)}+\mathcal R_1^{(c)}$ with
\begin{align}
\mathcal R_1^{(a)} &\= 
\sum_{\ell\geq 0}(\epsilon-\mu)\sum_{g\geq 1}(\epsilon\mu)^{g-1}\frac{|B_{2g}|}{(2g)!} u_{\ell+2g+1}\frac{\pa}{\pa u_\ell},
\\
\mathcal R_1^{(b)} &\= \frac 12 \sum_{\ell\geq 0}\sum_{k=0}^{\ell+1}\binom{\ell+1}{k}u_ku_{\ell+1-k}\frac{\pa}{\pa u_\ell},
\\
\mathcal R_1^{(c)} &\=-\frac 12 \sum_{\ell,m\geq 0}\frac{(\ell+1)!(m+1)!}{(\ell+m+3)!}u_{\ell+m+3}\frac{\pa^2}{\pa u_\ell\pa u_m}.
\end{align}
We compute separately each contribution, using~\eqref{comm}.
\begin{enumerate}[{\hspace{-18pt}\upshape (a)}]
\item
We have
\begin{align}
\nonumber
[\B,\mathcal R_1^{(a)}] &\= (\epsilon-\mu) \sum_{\ell\geq 0}\sum_{g\geq 1}(\epsilon\mu)^{g-1}\frac{|B_{2g}|}{(2g)!}[\B,u_{\ell+2g+1}]\frac{\pa}{\pa u_\ell} 
\\
&\=(\epsilon-\mu) \sum_{i,\ell\geq 0}\sum_{g\geq 1}(\epsilon\mu)^{g-1}\frac{|B_{2g}|}{(2g)!}\nu_{i,\ell+2g+1}\frac{\pa^2}{\pa u_i\pa u_\ell}\mathcal \B \= 0,
\end{align}
because $\nu_{i,\ell+k}=-\nu_{\ell,i+k}$ for any odd $k$.
\item
We first compute
\begin{align}
\nonumber
[\B,u_{j_1}u_{j_2}] &\= u_{j_1}[\B,u_{j_2}] \+ [\B,u_{j_1}]u_{j_2}
\\
\nonumber
&\=-\sum_{i\geq 0}\Bigl(\nu_{i,j_2}u_{j_1}\frac{\pa}{\pa u_i}\B +\nu_{i,j_1}\frac{\pa}{\pa u_i}\B u_{j_2}\Bigl)
\\
\nonumber
&\=-\sum_{i\geq 0}\Bigl(\nu_{i,j_2}u_{j_1}\frac{\pa}{\pa u_i}+\nu_{i,j_1}\frac{\pa}{\pa u_i}u_{j_2}\Bigr)\B \meno \sum_{i\geq 0}\nu_{i,j_1}\frac{\pa}{\pa u_i}[\B,u_{j_2}]
\\
&\=-\bigg(\sum_{i\geq 0}\Bigl(\nu_{i,j_2}u_{j_1}\frac{\pa}{\pa u_i}+\nu_{i,j_1}u_{j_2}\frac{\pa}{\pa u_i}\Bigr)
+\nu_{j_1,j_2}
-\sum_{i_1,i_2\geq 0}\nu_{i_1,j_1}\nu_{i_2,j_2}\frac{\pa^2}{\pa u_{i_1}\pa u_{i_2}}\bigg)\B.
\end{align}
Therefore,
\begin{align}
\nonumber
[\B,\mathcal R_1^{(b)}]\B^{-1} \= & \frac 12 \sum_{\ell\geq 0}\sum_{k=0}^{\ell+1}\binom{\ell+1}{k}[\B,u_ku_{\ell+1-k}]\B^{-1} \frac{\pa}{\pa u_\ell}
\\
\nonumber
\=& -\frac 12\sum_{i,\ell\geq 0}\sum_{k=0}^{\ell+1}\binom{\ell+1}{k}\bigg(\nu_{i,\ell+1-k}u_{k}+\nu_{i,k}u_{\ell+1-k}\bigg)\frac{\pa^2}{\pa u_i\pa u_\ell}
\+
\\
\nonumber
&
-\frac 12\sum_{\ell\geq 0}\sum_{k=0}^{\ell+1}\binom{\ell+1}{k}\nu_{k,\ell+1-k}\frac{\pa}{\pa u_\ell} \+
\\
\label{vanishes}
& +\frac 12\sum_{i_1,i_2,\ell\geq 0}\sum_{k=0}^{\ell+1}\binom{\ell+1}{k}\nu_{i_1,k}\nu_{i_2,\ell+1-k}\frac{\pa^3}{\pa u_{i_1}\pa u_{i_2}\pa u_\ell}.
\end{align}
Since
\be
\sum_{k=0}^{\ell+1}\binom{\ell+1}{k}\nu_{k,\ell+1-k} \= (-1)^{\frac{\ell+1}2}\frac{B_{\ell+3}}{\ell+3}\sum_{k=0}^{\ell+1}\binom{\ell+1}{k}(-1)^k \= 0, \qquad \ell\geq 0,
\ee
the second sum in~\eqref{vanishes} vanishes for all $\ell\geq 0$.
\item Finally, we have
\be
[\B,\mathcal R_1^{(c)}] \=\frac 12\sum_{i,\ell,m\geq 0}\frac{(\ell+1)!(m+1)!}{(\ell+m+3)!}\nu_{i,\ell+m+3}\frac{\pa^3}{\pa u_i\,\pa u_\ell\,\pa u_m}\B.
\ee
\end{enumerate}
Combining these three computations we obtain, denoting for convenience $b_k:=B_k/k$,
\begin{align}
\nonumber
[\B,\mathcal R_1]\B^{-1} \= &-\frac 12\sum_{i,\ell\geq 0}\sum_{k=0}^{\ell+1}\binom{\ell+1}{k}\bigg(
(-1)^{\frac{i+k-\ell-1}2}b_{i+k+\ell+3}u_{k}+(-1)^{\frac{i-k}2}b_{i+k+2}u_{\ell+1-k}
\bigg)\frac{\pa^2}{\pa u_i\pa u_\ell} \+
\\
\nonumber
&
+\frac 12\sum_{i_1,i_2,\ell\geq 0}\sum_{k=0}^{\ell+1}\binom{\ell+1}{k}(-1)^{\frac{i_1+i_2-\ell-1}2} b_{i_1+k+2}b_{i_2+\ell+3-k}\frac{\pa^3}{\pa u_{i_1}\pa u_{i_2}\pa u_\ell} \+
\\
\label{last}
&
+\frac 12\sum_{i,\ell,m\geq 0}\frac{(\ell+1)!(m+1)!}{(\ell+m+3)!}(-1)^{\frac{\ell+m+3-i}2}b_{i+\ell+m+5}\frac{\pa^3}{\pa u_i\pa u_\ell\pa u_m}.
\end{align}
It is easy to check that the first line on the right-hand side of~\eqref{last} equals $\mathcal R_2$ (see~\eqref{eq:RILW2}).
To complete the proof we need to show that the last two lines in~\eqref{last} cancel each other.
To this end, we first recall the following theorem by Skoruppa.
\begin{theorem}[\cite{Skoruppa}]
\label{thm:Skoruppa}
Suppose that the symmetric homogeneous bivariate polynomial $H(x,y) = \sum_{\nu=0}^n h_\nu x^\nu y^{n-\nu}$ of even positive degree~$n$ satisfies $H(x,y)=H(y,y-x)$.
Then
\be
\sum_{\substack{0<\nu<n \\ \nu \text{\em{ odd}}}}h_\nu \,\mathbb{G}_{\nu+1}\, \mathbb{G}_{n+1-\nu} \= h\, \mathbb{G}_{n+2}\meno\frac {h_1}n\,q\frac{\d}{\d q}\mathbb{G}_{n},
\ee
where $h:=-\frac 12\int_0^1 H(1,y)\,\d y$ and $\mathbb{G}_n$ is the Eisenstein series~\eqref{eq:Eisenstein}.
\end{theorem}
Since $\mathbb{G}_n$ has a Fourier series expansion in the upper half plane whose constant term is $-b_{n}/2$ we get
\be
\label{eq:Bernoulliconsequence}
\sum_{\substack{0<\nu<n \\ \nu \text{\em{ odd}}}}h_\nu \,b_{\nu+1}\, b_{n+1-\nu} \= -2\,h\, b_{n+2},
\ee
with the notations of Theorem~\ref{thm:Skoruppa}.
We need the following specialization: given positive integers $a_1,a_2,a_3$ with $a_1+a_2+a_3$ even, the polynomial
\begin{align}
\nonumber
H(x,y)\defis& \sum_{\pi\in S_3}x^{a_{\pi(1)}}(-y)^{a_{\pi(2)}}(y-x)^{a_{\pi(3)}}  \\
\=& \sum_{\pi\in S_3}\sum_{k=0}^{a_{\pi(3)}}\binom{a_{\pi(3)}}{k}(-1)^{a_{\pi(2)}+k}x^{a_{\pi(1)}+k}y^{a_{\pi(2)}+a_{\pi(3)}-k}
\end{align}
satisfies the condition of Theorem~\ref{thm:Skoruppa}, because
\begin{align}
\nonumber
H(y,y-x) &\= \sum_{\pi\in S_3}y^{a_{\pi(1)}}(x-y)^{a_{\pi(2)}}(-x)^{a_{\pi(3)}}
\\
&\=(-1)^{a_1+a_2+a_3}\sum_{\pi\in S_3}(-y)^{a_{\pi(1)}}(y-x)^{a_{\pi(2)}}(x)^{a_{\pi(3)}}\=H(x,y). 
\end{align}
Hence, by~\eqref{eq:Bernoulliconsequence}
\begin{multline}
\sum_{\pi\in S_3}\sum_{k=0}^{a_{\pi(3)}}\binom{a_{\pi(3)}}{k}(-1)^{a_{\pi(2)}+k}b_{a_{\pi(1)}+k+1}\,b_{a_{\pi(2)}+a_{\pi(3)}-k+1}
\\ \= 
b_{a_1+a_2+a_3+2}\sum_{\pi\in S_3}(-1)^{a_{\pi(1)}}\frac{a_{\pi(2)}!\,a_{\pi(3)}!}{(a_{\pi(2)}+a_{\pi(3)}+1)!}.
\end{multline}
Note that $a_{\pi(2)}+a_{\pi(3)}-k+1$ is even in the left-hand side of the last identity and so, multiplying both sides by $(-1)^{\frac{a_1+a_2+a_3}2}$, we may write it as
\begin{multline}
\sum_{\pi\in S_3}\sum_{k=0}^{a_{\pi(3)}}\binom{a_{\pi(3)}}{k}(-1)^{\frac{a_{\pi(1)}+a_{\pi(2)}-a_{\pi(3)}}2}b_{a_{\pi(1)}+k+1}\,b_{a_{\pi(2)}+a_{\pi(3)}-k+1}
\\=-b_{a_1+a_2+a_3+2}\sum_{\pi\in S_3}(-1)^{\frac{a_{\pi(2)}+a_{\pi(3)}-a_{\pi(1)}}2}\frac{a_{\pi(2)}!\,a_{\pi(3)}!}{(a_{\pi(2)}+a_{\pi(3)}+1)!}.
\end{multline}
It is clear by this identity that the two cubic operators in the $\pa/\pa u_i$'s, which appear in the second and third line of~\eqref{last}, cancel each other.
\end{proof}

\begin{corollary}
\label{corollary:homogeneous}
For all $k\geq -2$, the reduced density $\wt g_k^{\sf ILW}(\u;\epsilon)$ is homogeneous of weight $k+2$ if we assign weight $i+1$ to $u_i\hspace{1pt},$ $+1$ to $c$, and $-1$ to $\epsilon$ and $\mu$.
\end{corollary}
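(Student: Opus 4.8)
The plan is to prove Corollary~\ref{corollary:homogeneous} by induction on~$k$, using the reduced recursion of Lemma~\ref{lemma:simplifiedrecursion}. The base cases are immediate: $\wt g_{-2}^{\sf ILW}=1$ and $\wt g_{-1}^{\sf ILW}=u_0$ are homogeneous of weights $0$ and~$1$ for the grading of the statement, which assigns weight $i+1$ to $u_i$ and weight $-1$ to $\epsilon$ and to $\mu$ (the densities do not involve~$c$).

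For the inductive step I would first record the relevant weight bookkeeping. The total derivative $\partial_x=\sum_{i\geq0}u_{i+1}\,\partial/\partial u_i$ is homogeneous of weight $+1$, and on $\Q[\u]$ its kernel consists of the constants, so $\ker\partial_x=\Q[\epsilon,\mu]$ on $\Q[\u][\epsilon,\mu]$. The operator $\mathcal D=\epsilon\,\partial/\partial\epsilon+\mu\,\partial/\partial\mu$ is homogeneous of weight $0$ and acts on the $\epsilon^a\mu^b$-homogeneous subspace by multiplication by $a+b$, so $k+2+\mathcal D$ is invertible for every $k\geq-1$. The one genuine computation is that $\mathcal R_1^{\sf ILW}$ of~\eqref{eq:RILW1} is homogeneous of weight $+2$: the differential polynomial $\tfrac{u_0^2}2+(\epsilon-\mu)\sum_{g\geq1}(\epsilon\mu)^{g-1}\tfrac{|B_{2g}|}{(2g)!}u_{2g}$ is homogeneous of weight $2$, so $\partial_x^{i+1}$ applied to it has weight $i+3$ and is composed with $\partial/\partial u_i$ of weight $-(i+1)$, while $u_{i+j+3}\,\partial^2/\partial u_i\partial u_j$ has weight $(i+j+4)-(i+1)-(j+1)=2$.

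Assuming inductively that $\wt g_k^{\sf ILW}$ is homogeneous of weight $k+2$, I would decompose $\wt g_{k+1}^{\sf ILW}=\sum_w\phi_w$ into homogeneous components. By the above, $\mathcal R_1^{\sf ILW}\wt g_k^{\sf ILW}$ is homogeneous of weight $k+4$; since the left-hand side of~\eqref{eq:recursionreduced2} has weight-$(w+1)$ component $(k+2+\mathcal D)\partial_x\phi_w$, comparing homogeneous parts and using invertibility of $k+2+\mathcal D$ forces $\partial_x\phi_w=0$, hence $\phi_w\in\Q[\epsilon,\mu]$, for every $w\neq k+3$. Being homogeneous of weight~$w$, such a $\phi_w$ is a homogeneous polynomial in $\epsilon,\mu$ of degree~$-w$, which vanishes whenever $w>0$; and~\eqref{eq:recursionreduced1} is consistent with $\phi_{k+3}$ having weight $k+3$, as $\partial/\partial u_0$ lowers weight by~$1$. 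So $\wt g_{k+1}^{\sf ILW}$ equals its weight-$(k+3)$ component plus an element of $\Q[\epsilon,\mu]$, and it remains to see that this element vanishes, i.e. that $\wt g_{k+1}^{\sf ILW}(\mathbf 0;\epsilon,\mu)=0$ for $k\geq-1$.

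I expect this last point to be the main obstacle. It cannot be extracted from the reduced recursion: the right-hand side of~\eqref{eq:recursionreduced2} has no $\u$-independent term, and $\partial/\partial u_0$ annihilates $\u$-independent terms, so neither~\eqref{eq:recursionreduced1} nor~\eqref{eq:recursionreduced2} constrains the $\u$-independent part of $\wt g_{k+1}^{\sf ILW}$. Instead one has to use the definition $\wt g_{k+1}^{\sf ILW}=\B^{-1}g_{k+1}^{\sf ILW}$ together with the $\u$-independent part of $g_{k+1}^{\sf ILW}$, which by~\eqref{eq:defg} is the $n=0$ contribution $\sum_{g\geq1}\int_{\DR_g(0,\dots,0)}\psi_1^{k+1}\,\Lambda_g^\vee(\epsilon)\,c_{g,1}$, and show that the $\u$-independent corrections that $\B^{-1}$ produces out of the even-$\u$-degree parts of $g_{k+1}^{\sf ILW}$ cancel it exactly — the Bernoulli-number coefficients of~$\B$ in~\eqref{eq:B} being tailored precisely to this. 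Carrying out this cancellation, which I anticipate will require identities of the same flavour as~\eqref{eq:Bernoulliconsequence} together with the $\DR$-cycle geometry underlying~\eqref{eq:defg}, is where the substantive work lies; granting it, $\wt g_{k+1}^{\sf ILW}=\phi_{k+3}$ is homogeneous of weight $(k+1)+2$, which closes the induction.
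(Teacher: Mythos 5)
Your reduction is sound as far as it goes: the weight bookkeeping for $\pa_x$, $\mathcal D$ and $\mathcal R_1^{\sf ILW}$ is correct, and comparing homogeneous components in~\eqref{eq:recursionreduced2} does show that $\wt g_{k+1}^{\sf ILW}$ equals a weight-$(k+3)$ piece plus an element of $\Q[\epsilon,\mu]$. But the argument then stops at precisely the point that needs proving: you explicitly ``grant'' the vanishing of the $\u$-independent part, proposing to obtain it from the $n=0$ term of the Double Ramification formula~\eqref{eq:defg} and an unspecified cancellation against the Bernoulli coefficients of $\B^{-1}$. That step is the whole content of the statement in the problematic cases (already for $k=0$ the constant $-1/24$ in $g_0^{\sf KdV}$ must cancel against the $\B^{-1}$-correction $+1/24$ of $u_0^2/2$), it is not carried out, and the proposed route through the DR-cycle geometry is neither obviously tractable nor needed. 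So there is a genuine gap.

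Moreover, your diagnosis that ``neither~\eqref{eq:recursionreduced1} nor~\eqref{eq:recursionreduced2} constrains the $\u$-independent part'' is incorrect: equation~\eqref{eq:recursionreduced1} \emph{one level up} does constrain it. Indeed,~\eqref{eq:recursionreduced2} determines $\wt g_{k+2}^{\sf ILW}$ up to an additive element of $\Q[\epsilon,\mu]$ (which $\mathcal R_1^{\sf ILW}$ annihilates, so the ambiguity does not propagate), and then $\pdv{}{u_0}\wt g_{k+2}^{\sf ILW}=\wt g_{k+1}^{\sf ILW}$ has a fully determined, weight-homogeneous left-hand side of weight $k+3$; comparing its weight-$w$ components for $w\le 0$ forces your constants $\phi_w$ to vanish. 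This is exactly the uniqueness mechanism of Remark~\ref{remark:recursionwellposed}, and it is how the paper proceeds, though in a slicker form: it introduces the grading operator $\mathcal W=\sum_i(i+1)u_i\pdv{}{u_i}-\epsilon\pdv{}{\epsilon}-\mu\pdv{}{\mu}$, checks that $f_k:=\mathcal W\wt g_k^{\sf ILW}$ and $(k+2)\wt g_k^{\sf ILW}$ satisfy the same (uniquely solvable) recursion with the same initial condition, and concludes without ever decomposing into components or touching the geometric definition. Your proof would be complete, and essentially elementary, if you replaced the DR-cycle argument by this use of~\eqref{eq:recursionreduced1} at level $k+2$.
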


\begin{proof}
Let $\mathcal W \defis \sum_{i\geq 0}(i+1)u_i\pdv{}{u_i}-\epsilon\pdv{}{\epsilon}-\mu\pdv{}{\mu}$ be the grading operator with respect to the weights of the statement.
We need to prove that $\mathcal W\,\wt g_k^{\sf ILW} \= (k+2)\wt g_k^{\sf ILW}$ for $k\geq -1$ (the case $k=-2$ being trivial).
It is straightforward to verify the commutation relations
\be
\biggl[\mathcal W,\pdv{}{u_0}\biggr] \= -\pdv{}{u_0},\qquad
[\mathcal W,(k+2+\mathcal D)\pa_x] \= (k+2+\mathcal D)\pa_x,\qquad
[\mathcal W,\mathcal R_1^{\sf ILW}]\=2\mathcal R_1^{\sf ILW},
\ee
where $\mathcal D$ is given in~\eqref{eq:DILW}.
It follows from these relations and~\eqref{eq:recursionreduced1}-\eqref{eq:recursionreduced2} that the polynomials $f_k:=\mathcal W\,\wt g_k^{\sf ILW}$ satisfy the recursion
\begin{align}
\pdv{}{u_0}f_{k+1}\=&f_k+\wt g_k^{\sf ILW},\\
(k+2+\mathcal D)\pa_xf_{k+1}\=&\mathcal R_1^{\sf ILW}\,(f_k+\wt g_k^{\sf ILW}),\qquad k\geq -1,
\end{align}
with initial condition $f_{-1}=u_0\hspace{1pt}.$
This recursion uniquely determines (by an argument parallel to that in Remark~\ref{remark:recursionwellposed}) all $f_k$'s, for $k\geq -1$.
On the other hand, this recursion is satisfied by $f_k \= (k+2)\wt g_k\hspace{1pt},$ and the proof is complete.
\end{proof}

Let $\wt M[c,\epsilon,\mu]:=\wt M\otimes\Q[c,\epsilon,\mu] =: \bigoplus_k\wt M[c,\epsilon,\mu]_{k+2}\hspace{1pt},$ graded by the quasimodular weight and by assigning weight $+1$ to $c$ and $-1$ to $\epsilon$ and $\mu$.
The central result of this section now follows directly from Theorem~\ref{thm:main} and Corollary~\ref{corollary:homogeneous}.

\begin{framed}
\begin{theorem}
\label{thm:ILW}
For all $k\geq -2$, we have
\be
\opq {G^{\sf ILW}_k(\epsilon,\mu)} \,\in\,\wt M[c,\epsilon,\mu]_{k+2}\hspace{1pt}.
\ee
\end{theorem}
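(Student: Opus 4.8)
The plan is to read off the statement from Theorem~\ref{thm:main} and Corollary~\ref{corollary:homogeneous}, with only a scalar extension by the parameters $\epsilon,\mu$ to take care of. The first step is to pass to the reduced densities: by~\eqref{eq:reducedILW} we have $g_k^{\sf ILW}(\u;\epsilon,\mu)=\B\,\wt g_k^{\sf ILW}(\u;\epsilon,\mu)$, hence
\be
\opq{G_k^{\sf ILW}(\epsilon,\mu)} \= \lbrace\overline{\B\,\wt g_k^{\sf ILW}}(c)\rbrace_{\! q}.
\ee
Here I use that $\wt g_k^{\sf ILW}$ is a genuine polynomial in $\u,\epsilon,\mu$ — which is implicit in Corollary~\ref{corollary:homogeneous} and holds since $\B^{-1}=\exp\bigl(+\tfrac12\sum_{i,j\geq0}\nu_{i,j}\,\partial_{u_i}\partial_{u_j}\bigr)$ preserves the space of polynomials of bounded degree in the $u_j$.

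By Theorem~\ref{thm:main}, the assignment $g\mapsto\lbrace\overline{\B g}(c)\rbrace_{\! q}$ is a morphism of graded vector spaces $\Q[\u]\to\wt M[c]$, where $u_j$ has weight $j+1$ and $c$ has weight $+1$. Extending it $\Q[\epsilon,\mu]$-linearly and declaring $\epsilon$ and $\mu$ to have weight $-1$ on both sides, it becomes a morphism of graded vector spaces $\Q[\u,\epsilon,\mu]\to\wt M[c,\epsilon,\mu]$ for these gradings. Now Corollary~\ref{corollary:homogeneous} says precisely that $\wt g_k^{\sf ILW}$ is homogeneous of weight $k+2$; a morphism of graded vector spaces sends a homogeneous element of weight $k+2$ to one of weight $k+2$, so $\opq{G_k^{\sf ILW}(\epsilon,\mu)}=\lbrace\overline{\B\,\wt g_k^{\sf ILW}}(c)\rbrace_{\! q}$ lies in $\wt M[c,\epsilon,\mu]_{k+2}$. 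The degenerate case $k=-2$ is just $\opq{\overline{1}}=1\in\wt M[c,\epsilon,\mu]_0$.

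This argument has no real obstacle: both substantive ingredients are already in place — the graded-morphism property of $g\mapsto\lbrace\overline{\B g}(c)\rbrace_{\!q}$ (Theorem~\ref{thm:main}, via the trace computation in its proof and Lemma~\ref{lem:1}) and the homogeneity of the reduced ILW densities (Corollary~\ref{corollary:homogeneous}, via the simplified recursion of Lemma~\ref{lemma:simplifiedrecursion} together with Skoruppa's Theorem~\ref{thm:Skoruppa}) — so what remains is only to check that the three weights (quasimodular weight, $c$-weight, $(\epsilon,\mu)$-weight) are combined exactly as the target graded ring $\wt M[c,\epsilon,\mu]_{k+2}$ prescribes, which the scalar extension above makes automatic. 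If a more hands-on version were wanted, one could instead expand $\wt g_k^{\sf ILW}$ as a $\Q$-linear combination of monomials $\epsilon^i\mu^j\,u_{\mathbf a}$ for which the weight of $u_{\mathbf a}$ minus $(i+j)$ equals $k+2$, apply the explicit formula~\eqref{eq:opasQMF} — which exhibits $\lbrace\overline{\B u_{\mathbf a}}(c)\rbrace_{\!q}$ as a polynomial in $c$ and the Eisenstein series of total weight equal to the weight of $u_{\mathbf a}$ — and sum the contributions.
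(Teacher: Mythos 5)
Your proposal is correct and follows exactly the paper's own route: the paper derives Theorem~\ref{thm:ILW} directly from Theorem~\ref{thm:main} (the graded-morphism property of $g\mapsto\lbrace\overline{\B g}(c)\rbrace_{\!q}$) combined with Corollary~\ref{corollary:homogeneous} (homogeneity of the reduced densities $\wt g_k^{\sf ILW}=\B^{-1}g_k^{\sf ILW}$ of weight $k+2$). Your explicit handling of the $\Q[\epsilon,\mu]$-linear extension and the $k=-2$ case only spells out what the paper leaves implicit.
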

\end{framed}

The quantum KdV hierarchy mentioned in Section~\ref{sec:intro} corresponds to the special case $\mu=0$ of the ILW hierarchy, namely
\be
g_k^{\sf KdV}(\u;\epsilon) \= \left.g_k^{\sf ILW}(\u;\epsilon,\mu)\right|_{\mu=0},\qquad
G_k^{\sf KdV}(\epsilon) \= \left.G_k^{\sf ILW}(\epsilon,\mu)\right|_{\mu=0}.
\ee
Therefore, Theorem~\ref{thm:KdV} is a direct corollary of Theorem~\ref{thm:ILW}.

\section*{Acknowledgments}
We are indebted to Boris Dubrovin, Di Yang and Don Zagier for valuable conversations and suggestions.
We extend our thanks to the anonymous referee for useful comments and suggestions for improving the presentation.
GR acknowledges support from the Fonds de la Recherche Scientifique-FNRS under EOS project O013018F and from the FCT Grant 2022.07810.CEECIND.

\appendix

\section{Closed formulas for the quantum KdV hierarchy for \texorpdfstring{$\epsilon\to 0$ and ${\epsilon \to\infty}$}{ε to 0 and infinity}}\label{app}

\subsection{Dispersionless limit \texorpdfstring{$\epsilon\to 0$}{ε to 0}.} 
An explicit generating function for the Hamiltonian densities in the case $\epsilon = 0$ is due to Eliashberg, see \cite{Eliashberg} and \cite[Proposition~4.1]{BuryakRossi2}.
Namely, it is known that
\be
\label{eq:Eliashberg}
\sum_{k\geq -2}y^{k+2}\,g_{k}^{\sf KdV}(\u;\epsilon=0) \= \frac {\exp(y\, S(\i y \partial_x)u_0) }{S(y)}, \qquad S(y) \defis \frac{\sinh(y/2)}{y/2} \= \sum_{k\geq 0}\frac{y^{2k}}{4^k(2k+1)!},
\ee
where $\partial_x$ is defined by~\eqref{eq:D}. We observe here the simplification of this formula when we consider the reduced densities instead.
\begin{proposition}
\label{prop:Hopf}
Denoting $\wt g_k^{\sf KdV}:=\mathcal B^{-1}g_k^{\sf KdV}$, we have
\be
\label{eq:simplifiedEliashberg}
\sum_{k\geq -2}y^{k+2}\,\wt g_{k}^{\sf KdV}(\u;\epsilon=0) \=
\exp(y\, S(\i y \partial_x)u_0).
\ee
\end{proposition}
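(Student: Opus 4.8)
The plan is to apply $\B^{-1}$ directly to Eliashberg's generating function~\eqref{eq:Eliashberg}. Since $\B$, and hence $\B^{-1}=\exp\bigl(\tfrac12\sum_{i,j\geq0}\nu_{ij}\,\pdv{^2}{u_i\partial u_j}\bigr)$ with $\nu_{ij}:=(-1)^{\frac{i-j}{2}}\frac{B_{i+j+2}}{i+j+2}$, is $\Q[[y]]$-linear and acts coefficient-wise on~\eqref{eq:Eliashberg} (each coefficient of $y^{k+2}$ being a polynomial in~$\u$), we obtain
\be
\sum_{k\geq2}y^{k+2}\,\wt g_{k}^{\sf KdV}(\u;\epsilon=0) \= \frac{1}{S(y)}\,\B^{-1}\exp\bigl(y\,S(\i y\partial_x)u_0\bigr),
\ee
so it suffices to prove $\B^{-1}\exp\bigl(y\,S(\i y\partial_x)u_0\bigr) \= S(y)\exp\bigl(y\,S(\i y\partial_x)u_0\bigr)$.

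The first observation is that, because $\partial_x^m u_0 = u_m$, the exponent is \emph{linear} in the $u_i$ (indeed only even-indexed $u_i$ occur): with $S(z)=\sum_{k\geq0}z^{2k}/(4^k(2k+1)!)$,
\be
\ell(y)\defis y\,S(\i y\partial_x)u_0 \= \sum_{k\geq0}\frac{(-1)^k}{4^k(2k+1)!}\,y^{2k+1}\,u_{2k}\=\sum_{i\geq0}\alpha_i(y)\,u_i .
\ee
All the power series in~$y$ below converge coefficient-wise (the coefficient of~$y^N$ only involves $u_i$ with $i<N$), so the manipulations are legitimate in $\Q[\u][[y]]$. Next I would invoke the elementary fact that an exponentiated constant-coefficient second-order operator acts on the exponential of a linear form by a scalar: since $\exp(\sum_i\alpha_i u_i)$ is a formal eigenvector of each $\pdv{^2}{u_i\partial u_j}$ with eigenvalue $\alpha_i\alpha_j$,
\be
\B^{-1}\exp\bigl(\textstyle\sum_i\alpha_i u_i\bigr) \= \exp\bigl(E(y)\bigr)\,\exp\bigl(\textstyle\sum_i\alpha_i u_i\bigr),\qquad E(y)\defis\tfrac12\sum_{i,j\geq0}\nu_{ij}\,\alpha_i(y)\,\alpha_j(y) .
\ee
Thus the statement reduces to the scalar identity $\exp(E(y))=S(y)$.

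The remaining step is the explicit evaluation of $E(y)$. Substituting the $\alpha$'s and writing $i=2k$, $j=2l$, the factor $(-1)^{\frac{i-j}{2}}=(-1)^{k-l}$ cancels against the signs $(-1)^k(-1)^l$ carried by $\alpha_{2k}$ and $\alpha_{2l}$; grouping by total degree $y^{2m+2}$ leaves the inner sum $\sum_{k+l=m}\frac1{(2k+1)!(2l+1)!}$, which equals $2^{2m+1}/(2m+2)!$ by the binomial identity $\sum_{k=0}^{m}\binom{2m+2}{2k+1}=2^{2m+1}$. After simplification one finds
\be
E(y)\=\sum_{j\geq1}\frac{B_{2j}}{2j\,(2j)!}\,y^{2j},
\ee
which is exactly $\log S(y)$: indeed $\tfrac{\d}{\d y}\log S(y)=\tfrac12\coth(y/2)-\tfrac1y=\sum_{j\geq1}\frac{B_{2j}}{(2j)!}y^{2j-1}$ by the classical expansion $\tfrac z2\coth(z/2)=\sum_{j\geq0}\frac{B_{2j}}{(2j)!}z^{2j}$. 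Hence $\B^{-1}\exp(\ell)=S(y)\exp(\ell)$, and dividing by $S(y)$ gives~\eqref{eq:simplifiedEliashberg}. I expect the main obstacle to be this last computation — checking that all signs collapse and that the double sum over $(i,j)$ reorganizes, via the binomial identity, into precisely the Bernoulli series for $\log S(y)$. The ``Gaussian'' identity used in the middle step is standard, but it is worth pausing to confirm it holds in $\Q[\u][[y]]$ despite $\exp(\ell)$ being an infinite series in the~$u_i$: this is where the coefficient-wise convergence in~$y$ is used.
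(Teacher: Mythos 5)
Your proof is correct and follows the same route the paper indicates: the paper states Proposition~\ref{prop:Hopf} without a formal proof and only remarks afterwards that it is ``not difficult to derive'' \eqref{eq:simplifiedEliashberg} from Eliashberg's formula \eqref{eq:Eliashberg} using the structure of $\B$, which is precisely the derivation you carry out in detail. Your key computation --- that $\exp(y\,S(\i y\partial_x)u_0)$ is an eigenvector of $\B^{-1}$ with eigenvalue $\exp\bigl(\sum_{j\geq1}\tfrac{B_{2j}}{2j\,(2j)!}y^{2j}\bigr)=S(y)$ --- checks out, including the sign cancellation and the binomial identity $\sum_{k=0}^{m}\binom{2m+2}{2k+1}=2^{2m+1}$.
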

\begin{remark}
By assigning degree $-1$ to $y$ we see directly that the right-hand side in~\eqref{eq:simplifiedEliashberg} is of homogeneous weight zero (provided $u_k$ has weight $k+1$), hence we recover that $\wt g_{k}^{\sf KdV}(\u;\epsilon=0)$ has degree $k+2$.
Actually, since by the definition of the operator $\mathcal B$ we have $g_{k}^{\sf KdV}=\wt g_{k}^{\sf KdV}\+$lower degree terms, it is not difficult to derive~\eqref{eq:simplifiedEliashberg} from~\eqref{eq:Eliashberg}.
\end{remark}

\subsection{Limit \texorpdfstring{$\epsilon\to \infty$}{ε to infinity}.} 
We also provide formulas for the (sub)leading terms in $g_k^{\sf KdV}(\u;\epsilon)$ as $\epsilon\to\infty$. 

\begin{proposition}
\label{prop:infinitedispersion}
For all $k\geq -1$, the density~$g_k^{\sf KdV}(\u;\epsilon)$ is a polynomial in $\epsilon$ of degree $k+1$, whose leading and subleading terms are given by
\begin{align}
[\epsilon^{k+1}]\,g_k^{\sf KdV}(\u;\epsilon) \, &=\, \frac{u_{2k+2}}{24^{k+1}(k+1)!},& & k\geq -1,
\\
[\epsilon^k]\,g_k^{\sf KdV}(\u;\epsilon) \, &=\, -\frac 1{(-4)^k(2k+1)!!}\frac{B_{2k+2}}{4k+4}\+\frac 1{24^k}\sum_{j=0}^{2k}d(j,k) u_ju_{2k-j}\hspace{1pt},& & k\geq 0,
\end{align}
where the coefficients~$d(j,k)$ {\upshape(}for $j,k>0${\upshape)} are given by the generating series
\be
\sum_{j,k\geq 0}(2k+1)!!\,d(j,k)\,y^jx^k \=  \frac 1{2\sqrt{1-2x(1+y)^2}\left(1-2x(1-y+y^2)\right)}
\ee
and in particular satisfy $d(j,k)=0$ for $j>2k$ and $d(j,k)=d(2k-j,k)$ for $j,k\geq 0$ with $j\leq 2k$.
\end{proposition}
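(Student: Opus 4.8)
The proof runs the order-one recursion \eqref{eq:recursionBR1}--\eqref{eq:recursionBR2} for the KdV densities (here $\mathbf s=0$, so $\mathcal D=\epsilon\,\partial/\partial\epsilon$) one power of $\epsilon$ at a time, using Lemma~\ref{lemma:commutator} to spell out the commutator \eqref{eq:commutator} with $G_1=G_1^{\sf KdV}(\epsilon)$. From the displayed formula for $g_1^{\sf KdV}$ one reads off $G_1=G_1^{(0)}+\epsilon\,G_1^{(1)}$ with $G_1^{(0)}=\overline{\tfrac{u_0^3}{6}-\tfrac{u_0}{24}}$ and $G_1^{(1)}=\tfrac1{24}\overline{u_0u_2}$ (up to an additive scalar, immaterial for commutators).

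\emph{Step 1 (leading term).} I would prove by induction on $k\ge-1$ that $g_k^{\sf KdV}$ is polynomial in $\epsilon$ of degree $k+1$ with $[\epsilon^{k+1}]g_k^{\sf KdV}=\tfrac{u_{2k+2}}{24^{k+1}(k+1)!}$; the cases $k=-1,0$ are immediate. For the step, take the coefficient of $\epsilon^{k+2}$ in \eqref{eq:recursionBR2}: since $(k+2+\epsilon\partial_\epsilon)$ multiplies the coefficient of $\epsilon^m$ by $k+2+m$, the left side is $(2k+4)\,\partial_x\bigl([\epsilon^{k+2}]g_{k+1}\bigr)$, while on the right, by the inductive hypothesis, only $[[\epsilon^{k+1}]g_k,G_1^{(1)}]$ survives. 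As $[\epsilon^{k+1}]g_k$ is linear in $\mathbf u$, the sum in \eqref{eq:commutator} collapses to its $n=1$ term, and with $P_\ell(\xi)=\xi^\ell$ from Lemma~\ref{lemma:commutator} one gets $[u_{2k+2},\overline{u_0u_2}]=2\,u_{2k+5}$; dividing by $2k+4$ and integrating (the $\partial_x$-primitive being unique up to a constant, fixed to $0$ here by \eqref{eq:recursionBR1} together with positivity of the weight) yields $[\epsilon^{k+2}]g_{k+1}=\tfrac{u_{2k+4}}{24^{k+2}(k+2)!}$; the same computation with $m>k+2$ forces the higher $\epsilon$-coefficients to vanish (modulo the routine bookkeeping of the purely scalar terms, compare Remark~\ref{remark:recursionwellposed}).

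\emph{Step 2 (setting up the subleading recursion).} Put $h_k:=[\epsilon^k]g_k^{\sf KdV}$. Extracting the coefficient of $\epsilon^{k+1}$ in \eqref{eq:recursionBR1}--\eqref{eq:recursionBR2} and feeding in Step 1 gives the closed recursion $\partial_{u_0}h_{k+1}=\tfrac{u_{2k+2}}{24^{k+1}(k+1)!}$ and $(2k+3)\,\partial_x h_{k+1}=[\tfrac{u_{2k+2}}{24^{k+1}(k+1)!},G_1^{(0)}]+[h_k,G_1^{(1)}]$, starting from $h_0=\tfrac{u_0^2}{2}-\tfrac1{24}$. Via \eqref{eq:commutator} and Lemma~\ref{lemma:commutator}, the first bracket equals $\tfrac1{24^{k+1}(k+1)!}\,\partial_x^{2k+3}\tfrac{u_0^2}{2}=\tfrac1{2\cdot24^{k+1}(k+1)!}\sum_{a+b=2k+3}\binom{2k+3}{a}u_au_b$; for the second, $h_k$ being quadratic, only the $n=1,2$ terms of \eqref{eq:commutator} occur and the $n=2$ term vanishes because $[\xi^0]P_{\ell,m}(\xi)=0$ for all $\ell,m\ge0$ (read off from \eqref{eq:Plm}), leaving $[h_k,G_1^{(1)}]=\tfrac1{12}\sum_s\tfrac{\partial h_k}{\partial u_s}\,u_{s+3}$. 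In particular this recursion preserves the shape ``scalar $+$ symmetric quadratic'', so the ansatz $h_k=c_k+\tfrac1{24^k}\sum_{j=0}^{2k}d(j,k)\,u_ju_{2k-j}$ (with $d(j,k)=d(2k-j,k)$ and $d(j,k)=0$ for $j>2k$) is legitimate.

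\emph{Step 3 (generating function), and the main difficulty.} Substituting the ansatz and comparing the coefficient of each monomial $u_pu_q$ with $p+q=2k+5$ turns the previous step into a linear recursion for the $d(j,k)$ whose inhomogeneous term is built from the binomials $\binom{2k+3}{a}$. Encoding it in $D(x,y):=\sum_{j,k}(2k+1)!!\,d(j,k)\,y^jx^k$ should yield a bivariate functional equation whose solution is exactly $\tfrac1{2\sqrt{1-2x(1+y)^2}\,(1-2x(1-y+y^2))}$: the algebraic factor accounts for the inhomogeneous term through the partial-sum identity $\sum_{m\le N}\binom{2m}{m}4^{-m}=(2N+1)\binom{2N}{N}4^{-N}$, while the rational factor accounts for the geometric convolution coming from $\tfrac1{12}\sum_s\tfrac{\partial h_k}{\partial u_s}u_{s+3}$. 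From the closed form one reads off $d(j,k)=0$ for $j>2k$ (the $x^k$-coefficient of $D$ is a polynomial in $y$ of degree $2k$) and $d(j,k)=d(2k-j,k)$ (the substitution $(x,y)\mapsto(xy^2,1/y)$ fixes $D$). The scalars $c_k$ decouple from the $d(j,k)$ and are computed in closed form from the same recursion, read one level up as in Remark~\ref{remark:recursionwellposed}, giving $c_k=-\tfrac{B_{2k+2}}{(-4)^k(2k+1)!!(4k+4)}$. The hard part is precisely this last step: converting the monomial-by-monomial recursion for the differential polynomials $h_k$ into a single clean bivariate identity, keeping all normalizations aligned (the powers of $24$, the factor $(k+1)!$ in the source term, and the double factorials $(2k+1)!!$ chosen so that $D$ becomes algebraic-times-rational), and recognizing the central-binomial structure of the inhomogeneous term.
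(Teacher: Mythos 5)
Your overall skeleton coincides with the paper's: induction on $k$ via the Buryak--Rossi recursion \eqref{eq:recursionBR1}--\eqref{eq:recursionBR2}, extraction of the top and sub-top $\epsilon$-coefficients, and a generating-function computation for the $d(j,k)$. Your identification of the operators acting at each order (the $n=1$ term giving $\tfrac1{12}\sum_s u_{s+3}\partial_{u_s}$, the vanishing of the $n=2$ contribution because $[\xi^0]P_{\ell,m}(\xi)=0$, the source term $\tfrac1{2}\sum_{a+b=2k+3}\binom{2k+3}{a}u_au_b$) matches the paper's $\mathcal R_0$ and $\mathcal R_1$, and the recurrence you would obtain for the $d(j,k)$ is the correct one. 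The essential difference is that the paper first conjugates by $\B$, working with the reduced densities $\wt g_k^{\sf KdV}=\B^{-1}g_k^{\sf KdV}$ and the simplified recursion of Lemma~\ref{lemma:simplifiedrecursion}; you work with the unreduced densities throughout. This is not a cosmetic choice, and it is where your argument has a genuine gap.

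The problem is the determination of the scalar (degree-zero in $\u$) parts. The recursion \eqref{eq:recursionBR2} is blind to constants, since $\partial_x$ annihilates them; and \eqref{eq:recursionBR1} fixes the constant of $[\epsilon^k]g_k$ only from the $u_0$-linear part of $[\epsilon^k]g_{k+1}$, i.e., from the \emph{sub-subleading} $\epsilon$-coefficient of the next density, which your induction never computes. So your assertion that ``the scalars $c_k$ decouple \ldots and are computed in closed form from the same recursion, read one level up'' does not go through as stated: reading one level up requires descending one more level in $\epsilon$, indefinitely. Likewise, in Step 1 your appeal to ``positivity of the weight'' to kill the integration constant is not available for the unreduced densities, which are not homogeneous (e.g.\ $g_0$ contains $-\tfrac1{24}$). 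The paper resolves both issues at once: Corollary~\ref{corollary:homogeneous} shows the reduced densities are homogeneous of weight $k+2$, so every integration constant for $\wt g_k$ vanishes for weight reasons, and the constant $-\tfrac{B_{2k+2}}{(-4)^k(2k+1)!!(4k+4)}$ of the unreduced subleading term then falls out of applying $\B$ to the quadratic part, via the evaluation $\sum_{j}(-1)^jd(j,k)=\tfrac{6^k}{2(2k+1)!!}$ (i.e.\ $y=-1$ in the generating series). If you want to avoid $\B$, you must supply an independent argument for the constants. Separately, your Step 3 is only a plan (``should yield a bivariate functional equation''): the paper actually derives $(2k+3)(1+y)\Delta_{k+1}(y)=\tfrac{(1+y)^{2k+3}}{2(k+1)!}+2(1+y^3)\Delta_k(y)$ and solves it; the central-binomial partial-sum identity you invoke is not needed, only the expansion $\sum_k\tfrac{(2k+1)!!}{(k+1)!}\tfrac{(x(1+y)^2)^{k+1}}{2}=\tfrac12\bigl((1-2x(1+y)^2)^{-1/2}-1\bigr)$ and the resulting linear equation for $D(x,y)$.
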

\begin{proof}
It is straightforward from the definition of $\mathcal B$ and the identity
\be
\sum_{j=0}^{2k}(-1)^j\,d(j,k) \= \frac 1{(2k+1)!!}\,[x^k] \frac 1{2(1-6x)} \= \frac{6^k}{2(2k+1)!!}
\ee
to check that the statement is equivalent to the fact that the reduced density~$\wt g_k^{\sf KdV}:=\mathcal B^{-1}g_k^{\sf KdV}$ is a polynomial in~$\epsilon$ of degree~${k+1}$ for $k\geq -1$ whose leading and subleading terms are given by
\begin{align}
\label{toprove1}
[\epsilon^{k+1}]\,\wt g_k^{\sf KdV}(\u;\epsilon) \, &=\, \frac{u_{2k+2}}{24^{k+1}(k+1)!},& & k\geq -1,
\\
\label{toprove2}
[\epsilon^k]\,\wt g_k^{\sf KdV}(\u;\epsilon) \, &=\, \frac 1{24^k}\sum_{j=0}^{2k}d(j,k) u_ju_{2k-j}\hspace{1pt},& & k\geq 0.\end{align}
Therefore, it suffices to show~\eqref{toprove1} and~\eqref{toprove2}.
The~$\wt g_k$'s are determined by~\eqref{eq:recursionreduced1} and~\eqref{eq:recursionreduced2} with $\mu=0$, namely,
\be
\label{eq:recKdV}
\Bigl(k+2+\epsilon\pdv{}{\epsilon}\Bigr)\pa_x\wt g_{k+1}^{\sf KdV} \= (\mathcal R_0+\epsilon\,\mathcal R_1)\wt g_k^{\sf KdV},\qquad
\pdv{\wt g_{k+1}^{\sf KdV}}{u_0}\= \wt g_k^{\sf KdV},\qquad\quad k\geq -1,
\ee
with initial condition $\wt g_{-1}=u_0\hspace{1pt},$ where
\begin{align}
\label{eq:R0KdV}
\mathcal R_0\defis &\frac 12\sum_{i,j\geq 0}\left(\frac{(i+1)!}{(i+1-j)!j!}u_{i+1-j}u_j\pdv{}{u_i}-\frac{(i+1)!(j+1)!}{(i+j+3)!}u_{i+j+3}\pdv{^2}{u_i\pa u_j}\right),
\\
\mathcal R_1\defis & \frac 1{12}\sum_{i\geq 0}u_{i+3}\pdv{}{u_i}.
\end{align}
It follows that $\wt g_k$ is a polynomial of degree at most~${k+1}$ in~$\epsilon$.
Moreover, the leading term satisfies the recursion
\be
\left(2k+4\right)\pa_x\bigl([\epsilon^{k+2}]\wt g_{k+1}^{\sf KdV}\bigr) \= \mathcal R_1\bigl([\epsilon^{k+1}]\wt g_k^{\sf KdV}\bigr),\qquad\qquad k\geq -1,
\ee
with $[\epsilon^0]\,\wt g_{-1}^{\sf KdV} \= u_0\hspace{1pt}.$
Hence, for $k\geq -1$ it follows that $[\epsilon^{k+1}]\,\wt g_k^{\sf KdV}(\u;\epsilon) = \frac{u_{2k+2}}{24^{k+1}(k+1)!}\+c_k$ for some constants~$c_k$ depending on~$k$ only.
By Corollary~\ref{corollary:homogeneous}, $[\epsilon^{k+1}]\,\wt g_k^{\sf KdV}(\u;\epsilon)$ must be of homogeneous weight~${2k+3}$, hence $c_k=0$.
Then the subleading term is determined by the recursion
\be
\label{eq:subleadingrec}
\left(2k+3\right)\pa_x\bigl([\epsilon^{k+1}]\,\wt g_{k+1}^{\sf KdV}\bigr) \= \frac{\mathcal R_0(u_{2k+2})}{24^{k+1}(k+1)!}+\mathcal R_1\bigl([\epsilon^{k}]\,\wt g_k^{\sf KdV}\bigr),\qquad k\geq 0
\ee
with $[\epsilon^0]\wt g_0^{\sf KdV}\=u_0^2/2.$ Therefore, $[\epsilon^k]\,\wt g_k^{\sf KdV}=24^{-k}\sum_{j=0}^{2k}d(j,k)u_ju_{2k-j}$ for some coefficients~$d(j,k)$.
This is in principle only true up to a constant depending on~$k$ only, however Corollary~\ref{corollary:homogeneous} again implies that this constant vanishes.
Here the coefficients are assumed to satisfy $d(j,k)=d(2k-j,k)$ and, as a consequence of~\eqref{eq:subleadingrec}, are subject to the recurrence
\be
(2k+3)(d(j,k+1)+d(j-1,k+1)) \= \frac 1{2(k+1)!}\binom{2k+3}j\+2(d(j-3,k)+d(j,k)),\quad j,k\geq 0,
\ee
where $d(j,k)=0$ for $j<0$ or $j>2k$, with initial condition $d(0,0)\=1/2$.
Therefore
\be
(2k+3)(1+y)\,\Delta_{k+1}(y) \= \frac{(1+y)^{2k+3}}{2(k+1)!}+2(1+y^3)\,\Delta_k(y),\qquad \Delta_k(y)\defis\sum_{j=0}^{2k}d(j,k)\,y^j.
\ee
Dividing by $(1+y)$, multiplying by $x^{k+1}(2k+1)!!$, and summing over $k\geq 0$ we obtain
\be
D(x,y)-\frac 12\=\frac{1}{2} \biggl(\frac{1}{\sqrt{1-2 x (y+1)^2}}-1\biggr)\+2x(1-y+y^2)D(x,y),
\ee
where $D(x,y) := \sum_{k\geq 0}(2k+1)!!\,\Delta_k(y)\,x^k$, and~\eqref{toprove2} follows.
\end{proof}

In particular, we obtain the following immediate consequence, whose proof is omitted.

\begin{corollary}
\label{corollary:infinitedispersion}
For $k\geq 0$, $G_k^{\sf KdV}(\epsilon)=\op{g_k^{\sf KdV}(\u;\epsilon)}$ is a polynomial in $\epsilon$ of degree~$k$ whose leading coefficient is
\be
[\epsilon^k]\,G_k^{\sf KdV}(\epsilon) \= \frac{c^2}2\delta_{k,0}\+\frac 1{(-4)^k(2k+1)!!}\biggl(-\frac{B_{2k+2}}{4k+4}\+\sum_{j\geq 1}j^{2k+1}p_j\pdv{}{p_j}\biggr).
\ee
\end{corollary}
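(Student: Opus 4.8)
The plan is to obtain the statement by applying the quantization map $g\mapsto\overline{g}$ to the explicit leading and subleading coefficients in $\epsilon$ of $g_k^{\sf KdV}(\u;\epsilon)$ furnished by Proposition~\ref{prop:infinitedispersion}. Two elementary properties of $\overline{\,\cdot\,}$ recalled in Section~\ref{sec:1.1} are used: it is $\Q[\epsilon]$-linear (the parameter $\epsilon$ is inert under quantization), so that extracting the coefficient of $\epsilon^m$ commutes with $\overline{\,\cdot\,}$, i.e.\ $[\epsilon^m]\,G_k^{\sf KdV}(\epsilon)=\overline{[\epsilon^m]\,g_k^{\sf KdV}(\u;\epsilon)}$; and it annihilates the image of $\partial_x$. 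From the second property I would extract, first, that $\overline{u_m}=0$ for $m\geq 1$ (since $u_m=\partial_x^m u_0$) and, second, applying $\overline{\,\cdot\,}$ to $\partial_x(u_a u_b)=u_{a+1}u_b+u_a u_{b+1}$, the relation $\overline{u_{a+1}u_b}=-\overline{u_a u_{b+1}}$, whence by induction $\overline{u_j u_{2k-j}}=(-1)^j\,\overline{u_0 u_{2k}}$ for $0\leq j\leq 2k$.

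With this preparation the degree assertion is immediate: the top $\epsilon$-coefficient of $g_k^{\sf KdV}$ is, by Proposition~\ref{prop:infinitedispersion}, proportional to $u_{2k+2}$, which is annihilated by $\overline{\,\cdot\,}$ for $k\geq 0$, so $G_k^{\sf KdV}(\epsilon)$ has $\epsilon$-degree at most $k$; exactness of the degree will follow once the coefficient of $\epsilon^k$ is computed, since it is diagonal on the monomial basis of $\Lambda$ with eigenvalues that do not all vanish. For that coefficient I would push the subleading formula of Proposition~\ref{prop:infinitedispersion} through $\overline{\,\cdot\,}$: the scalar term $-\frac{1}{(-4)^k(2k+1)!!}\,\frac{B_{2k+2}}{4k+4}$ is unchanged, and the quadratic part becomes $\frac{1}{24^k}\Bigl(\sum_j(-1)^j d(j,k)\Bigr)\,\overline{u_0 u_{2k}}$ by the relation above. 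Here $\sum_j(-1)^j d(j,k)=\frac{6^k}{2(2k+1)!!}$ is the specialization at $y=-1$ of the generating series for the $d(j,k)$, which is exactly the identity established in the proof of Proposition~\ref{prop:infinitedispersion}, while for $k\geq 1$ the definition~\eqref{eq:operatorLk} of $L_{2k+2}$ gives $\overline{u_0 u_{2k}}=2(-1)^k\sum_{j\geq 1}j^{2k+1}p_j\pdv{}{p_j}$. Assembling these and using $(-6)^k/24^k=(-4)^{-k}$ gives $[\epsilon^k]\,G_k^{\sf KdV}(\epsilon)=\frac{1}{(-4)^k(2k+1)!!}\Bigl(-\frac{B_{2k+2}}{4k+4}+\sum_{j\geq 1}j^{2k+1}p_j\pdv{}{p_j}\Bigr)=\frac{L_{2k+2}}{(-4)^k(2k+1)!!}$ for $k\geq 1$; for $k=0$ one instead uses $[\epsilon^0]\,g_0^{\sf KdV}=\frac{1}{2}u_0^2-\frac{1}{24}$ together with $\overline{u_0^2}=c^2+2\sum_{j\geq 1}j\,p_j\pdv{}{p_j}$, which yields the additional summand $\frac{c^2}{2}$, i.e.\ the term $\frac{c^2}{2}\delta_{k,0}$.

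The computation is pure bookkeeping, so the only point requiring genuine care is the dependence on $c$: one must track which quadratic monomials $u_a u_b$ acquire a $c^2$ contribution under normal ordering, and only $u_0^2$ does, because in the Fourier series $v_j(x)$ the constant mode $\omega_0$ is weighted by $0^j$, which equals $1$ for $j=0$ but $0$ for $j\geq 1$. This is precisely what singles out the term $\frac{c^2}{2}\delta_{k,0}$ and explains why it lies outside $L_{2k+2}$ in the closed formula; everything else is a direct consequence of Proposition~\ref{prop:infinitedispersion} together with the two properties of $\overline{\,\cdot\,}$ recalled above.
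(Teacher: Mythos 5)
Your argument is correct and is precisely the verification the paper leaves out (the corollary is stated as an ``immediate consequence, whose proof is omitted''): apply $\overline{\,\cdot\,}$ to the coefficients from Proposition~\ref{prop:infinitedispersion}, kill the $u_{2k+2}$ top term since it lies in the image of $\partial_x$, collapse $\overline{u_ju_{2k-j}}$ onto $(-1)^j\,\overline{u_0u_{2k}}$, evaluate $\sum_j(-1)^jd(j,k)$ via the generating series at $y=-1$, and isolate the $c^2$ contribution at $k=0$. Your care about which quadratic monomials produce a $c^2$ term, and your use of the full sum over $j=0,\dots,2k$ (as in the proof of the Proposition, rather than the range $j\le k$ appearing in its statement), are exactly the right points to attend to.
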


\section{Tables of quasimodular forms}\label{app2}

To illustrate our main theorems, we provide the following examples. As above, $\mathbb G_k$ is the Eisenstein series~\eqref{eq:Eisenstein}. Note that $\mathbb{G}_8=120\mathbb{G}_4^2$.

\subsection{}
Table of $q$-series associated with the first few quantum KdV Hamiltonian operators.
\[
\def\arraystretch{1.5}
\allowdisplaybreaks
\begin{array}{l||l}
k &  \bigl\lbrace G_{k}^{\sf KdV}(\epsilon)\bigr\rbrace_{\! q}  \\\hline\hline
-2 &  1
\\\hline
-1 & c
\\\hline
0 & \mathbb{G}_2+\frac{c^2}{2}
\\\hline
1 & c \,\mathbb{G}_2 +  \frac{c^3}{6} \meno \frac{\epsilon}{24}( 2 \mathbb{G}_4)
\\\hline
2 & \frac{1}{2}\mathbb{G}_2^2 + \frac{1}{12}\mathbb{G}_4 + \frac{c^2}{2} \mathbb{G}_2 + \frac{c^4}{24} \meno \frac{\epsilon}{24}(2c\, \mathbb{G}_4) \+ \bigl(\frac{\epsilon}{24}\bigl)^{\! 2}\frac{12}{5} \mathbb{G}_6 
\\\hline
3 & \frac c2 \,\mathbb G_2^2 +\frac c{12}\mathbb G_4+\frac {c^3}6\mathbb G_2+\frac {c^5}{120}
\meno\frac{\epsilon}{24}(\frac 16\mathbb G_6+2\mathbb G_4\mathbb G_2+c^2\mathbb G_4)
\+\bigl(\frac{\epsilon}{24}\bigl)^{\! 2}(\frac {12}5c\,\mathbb G_6)
\meno\bigl(\frac{\epsilon}{24}\bigl)^{\! 3}(\frac {72}{35}\mathbb G_8)
\end{array}
\]

\subsection{}
Table of non-zero $q$-series of the form $\left\lbrace\op{\B g} \right\rbrace_{\! q}$ for monomials $g\in \Q[\u]$ of weight up to $9$ (recalling that $u_k$ has weight $k+1$).

\smallskip

\resizebox{.95\textwidth}{!} 
{$\def\arraystretch{1.5}
\allowdisplaybreaks
\begin{array}{ll||ll}
g & \left\lbrace\op{\B g} \right\rbrace_{\! q}  & 
	g & \left\lbrace\op{\B g} \right\rbrace_{\! q}  \\\hline\hline
u_0^2 & 2\mathbb{G}_2+c^2 
	&u_0^5u_2 &  - 120\mathbb{G}_4\mathbb{G}_2^2 - 120\mathbb{G}_4\mathbb{G}_2\,c^2-10\mathbb{G}_4\,c^4\\\hline
u_0^4 &  12 \mathbb{G}_2^2 + 12 \mathbb{G}_2\,c^2 + c^4 
	&u_0^4u_1^2& 24\mathbb{G}_4\mathbb{G}_2^2  + 24\mathbb{G}_4\mathbb{G}_2\,c^2+ 2\mathbb{G}_4\,c^4\\\hline
u_0u_2 & -2\mathbb{G}_4 
	& u_0^3u_4& 12\mathbb{G}_6\mathbb{G}_2 + 6\mathbb{G}_6\,c^2 \\\hline
u_1^2 & 2\mathbb{G}_4 
	& u_0^2u_1u_3 &  - 4\mathbb{G}_6\mathbb{G}_2-2\mathbb{G}_6\,c^2\\\hline
u_0^6 &   120\mathbb{G}_2^3 + 180\mathbb{G}_2^2\,c^2+ 30\mathbb{G}_2\,c^4 + c^6 
	& u_0^2u_2^2 & 8\mathbb{G}_4^2 + 4\mathbb{G}_6\mathbb{G}_2 + 2\mathbb{G}_6\,c^2 \\\hline
u_0^3 u_2 &  - 12\mathbb{G}_4\mathbb{G}_2-6\mathbb{G}_4\,c^2 	
	& u_0u_1^2u_2 & -4\mathbb{G}_4^2 \\\hline
u_0^2 u_1^2 & 4\mathbb{G}_4\mathbb{G}_2 + 2\mathbb{G}_4\,c^2
	& u_0u_6 & -2\mathbb{G}_8 \\\hline
u_0 u_4 & 2\mathbb{G}_6
	& u_1u_5 &  2\mathbb{G}_8 \\\hline
u_1 u_3 & -2\mathbb{G}_6
	& u_1^4 & 12\mathbb{G}_4^2 \\\hline
u_2^2 & 2\mathbb{G}_6 
	& u_2u_4 &  -2\mathbb{G}_8\\\hline
u_0^8 & 1680\mathbb{G}_2^4 + 3360\mathbb{G}_2^3\,c^2 + 840\mathbb{G}_2^2\,c^4 + 56\mathbb{G}_2\,c^6 + c^8 
	& u_3^2 &  2\mathbb{G}_8
\end{array}$
}

\end{document}